\newtheorem{theorem}{Theorem}
\newtheorem{algorithm}{Algorithm}
\newtheorem{corollary}{Corollary}
\newtheorem{example}{Example}
\newcommand{\Phib}{{\mbox{\boldmath$\Phi$}}}
\newcommand{\lcm}{{\rm lcm}}
\begin{document}
\title{A Systematic Framework for the Construction of Optimal Complete Complementary Codes
}
\author{Chenggao~Han,~\IEEEmembership{Member,~IEEE,}
        	     Naoki~Suehiro,~\IEEEmembership{Senior Member,~IEEE,}
      and Takeshi~Hashimoto,~\IEEEmembership{Member,~IEEE,}
\thanks{Manuscript received March 26, 2007; revised February 18, 2009.}
\thanks{C. Han and T. Hashimoto are with Department of Electronic Engineering, Faculty of Electro-Communications, University of Electro-Communications, 1-5-1 Chofugaoka, Chofu-shi, Tokyo 185-8585, Japan (E-mail:\{hana,hasimoto\}@ee.uec.ac.jp).}
\thanks{N. Suehiro is with Graduate School of Information and System Engineering, University of Tsukuba, 1-1-1 Tennoudai, Tsukuba-shi, Ibaraki 305-0006, Japan (E-mail:suehiro@iit.tsukuba.ac.jp).}}

\maketitle
\begin{abstract}
The complete complementary code (CCC) is a sequence family with ideal correlation sums which was proposed by Suehiro and Hatori. Numerous literatures show its applications to direct-spread code-division multiple access (DS-CDMA) systems for inter-channel interference (ICI)-free communication with improved spectral efficiency. In this paper, we propose a systematic framework for the construction of CCCs based on $N$-shift cross-orthogonal sequence families ($N$-CO-SFs). We show theoretical bounds on the size of $N$-CO-SFs and CCCs, and give a set of four algorithms for their generation and extension. The algorithms are optimal in the sense that the size of resulted sequence families achieves theoretical bounds and, with the algorithms, we can construct an optimal CCC consisting of sequences whose lengths  are not only almost arbitrary but even variable between sequence families. We also discuss the family size, alphabet size, and lengths of constructible CCCs based on the proposed algorithms.
\end{abstract}

\begin{keywords}
Golay pair, complementary set, mutually orthogonal complementary set, $N$-shift cross orthogonal sequence set, complete complementary code, spectral efficiency.
\end{keywords}

\section{Introduction}
In a multi-path environment, the performance of a direct-spread code-division multiple access (DS-CDMA) system relies on the correlation properties of the employed spreading sequences, and the full spectral efficiency \cite{Verdu99-03J} is attained only when the sequences have ideal correlation properties, {\it i.e.}, when the auto-correlation of every sequence is zero except for zero shift and the cross-correlation between every pair of sequences is zero for all shifts. Unfortunately, such a sequence set does not exist and, for practical DS-CDMA systems, we employ sequences whose correlations have as small side-lobes as possible, like Gold sequences, M-sequences, and Kasami sequences \cite{Fan99B}. Nonideal correlation properties of these sequences cause the near-far problem and inter-channel interference (ICI) that limit spectral efficiency of the DS-CDMA systems.

In \cite{Suehiro98-11C} and \cite{Suehiro00-09C}, Suehiro {\it et al.} have studied ICI-free DS-CDMA systems   based on the complete complementary code (CCC) proposed in \cite{Suehiro88-01J}. CCC is a collection of sequence sets, called a sequence family in this paper, with the property that the auto-correlation sum in each sequence set is zero except for zero shift and the cross-correlation sum between each pair of distinct sequence sets is zero for all shifts. (Exact definitions are given later.) In a CCC-based DS-CDMA (CCC-CDMA) system employing a CCC consisting of $M$ sequence sets of size $N$, the sequence sets are assigned to up to $M$ users and each user transmits $N$ spectrally spread signals through $N$ independent subchannels. At the receiver, these signals from the subchannels are passed through the corresponding matched filters and combined. The subchannels may be separated in frequency \cite{Suehiro98-11C} or in time \cite{Suehiro00-09C}. The former case was also analyzed by Tseng and Bell in \cite{Tseng00-01J}, while numerical results of the latter can be found in \cite{Chen06-02J,Han01-09C}, and \cite{Kojima06-09J}. Today, application of CCCs is also extended to multi-carrier systems \cite{Chen01-10J} and multi-input multi-output (MIMO) systems \cite{Chen07-02J,Lu08-01J}. In the area of sequence design, CCCs are used for constructing zero correlation zone (ZCZ) sequence sets \cite{Suehiro94-06J,Appuswamy06-08J,Han07-09C,Han08-12J},  which also provide ICI-free convolutional spreading CDMA systems \cite{Nalin06-11J,Nalin07-06J}, and are used in other areas such as image processing \cite{Khamy04-03C,Khamy05-11C}, synchronization \cite{Detert04-05C,Liu07-03C}, and signal processing \cite{Ozgur05-03C}.

The works leading to CCC started in 1961 when Golay \cite{Golay61-4J} studied a pair of binary sequences the sum of whose auto-correlations becomes zero except for zero shift and termed it a {\it complementary pair}. Following the Golay's work, properties of complementary pairs and relationships to other types of sequences were investigated by Turyn \cite{Turyn63-01J} and by Taki {\it et al.} \cite{Taki69-03J}. Tseng and Liu \cite{Tseng72-09J} extended the Golay's idea to a {\it complementary set}, a set of those sequences the sum of whose auto-correlations is zero except for zero shift and studied binary complementary sets with orthogonal properties. Multiphase complementary pairs and complementary sets were studied by Sivaswamy \cite{Sivaswamy78-09J} and Frank \cite{Frank80-11J}, respectively. Since sequences from complementary pairs can be used to reduce the peak-to-average power ratio (PAPR) of orthogonal frequency division multiplexing (OFDM) systems \cite{Davis99-11J,Nee00B}, complementary pairs consisting of symbols from quadrature amplitude modulation (QAM) and polyphase signal sets have been also investigated by numerous researchers \cite{Robing01-07J,Tarokh03-01J,Lee06-04J,Li05-03J,Fiedler06-09J}. Differently from these works, Suehiro and Hatori \cite{Suehiro88-01J} extended the idea of the complementary set to what is now known as the {\it complete complementary code} (CCC), the collection of complementary sets with the additional property that the cross-correlation sum between every pair of distinct sequence sets is zero. In \cite{Suehiro88-01J}, they introduced {\it $N$-shift cross-orthogonal sequence sets} ($N$-CO-SSs) and constructed CCCs from the $N$-CO-SSs. However, their construction has a restriction that the sequence length of the constructed CCC must be no shorter than $N^2$ when the CCC consists of $N$ complementary sets, which limits the spectral efficiency of the CCC-CDMA systems discussed in \cite{Suehiro98-11C} and \cite{Suehiro00-09C}. To overcome this weakness, we have proposed improved construction methods for CCCs with length $N/P$ in \cite{Han04-04J} and with length $MN/P$ in \cite{Han04-10C} for arbitrary integers $1 \leq P,M \leq N$. We have also proposed a method for the construction of CCCs which allows different complementary sets to have distinct sequence lengths in \cite{Raja07-07J}. 
Although there are other approaches to the construction of CCCs, {\it e.g.}, from existing CCC \cite{Huang02-11C,Lu04-12J,Marziani07-05J,Yang07-11C,Wu08-01J} or from Reed-Muller codes \cite{Chen08-11J}, CCCs constructed by these methods lack the flexibility of allowing variable length sequences that the one in \cite{Raja07-07J} has. Recently, investigation of CCC has been extended further, for example, to the periodic cases \cite{Torii00-07C,Torii01-03C} and to the multi-dimensional cases \cite{Farkas03-10C,Turcsany04-10C,Zhang05-02J,Raja06-10C}.

In this paper, extending the notation of $N$-CO-SS, we introduce {\it $N$-shift cross-orthogonal sequence families} ($N$-CO-SFs) whose sequences may have distinct lengths, and we propose methods to construct $N$-CO-SFs and CCCs from those $N$-CO-SFs. We prove theoretical limitations on the size of $N$-CO-SF and of CCC and present general and systematic methods to construct the optimal $N$-CO-SFs and CCCs. These constructions of $N$-CO-SFs and CCCs consist of code generation and extension, and only unitary(-like) matrices are used. We also discuss the family size, alphabet size, and sequence length of the constructed CCCs.

The rest of this paper is as follows. In Section II, after the definition of the correlation and correlation sum, we define complementary set, CCC, and $N$-CO-SF. In Section III, we derive theoretical bounds on the size of an $N$-CO-SF and introduce generation/expansion algorithms for $N$-CO-SF with the notion of multi-level partition. Similar to Section III, Section IV gives theoretical bounds on the size of a CCC and generation/expansion algorithms for CCC. The family size, alphabet size, and lengths of the CCCs constructed by our algorithms are discussed in Section V. Section VII gives conclusions.

\subsection{Notations}
For two nonnegative integers $a$ and $b$, the remainder and quotient of $a/b$ are denoted by $[a]_b$ and $\lfloor a/b \rfloor$, respectively, and, ${\rm lcm}(a,b)$ denotes their least common multiple. 
A vector is denoted by a bold lowercase letter and is also represented with its entries as ${\bf v} = (v_n)_{n=0}^{N-1}$. For simplicity, we identify the vector $\bf v$ and a sequence $v(n)$ satisfying $v(n) = v_n$ for $0 \leq n < N$ and $v(n)=0$ otherwise. The concatenation of $M$ vector ${\bf v}_m$, $0 \leq m < M$, is denoted by the $({\bf v}_0~{\bf v}_1~\cdots~{\bf v}_{M-1})$ or $({\bf v}_m)_{m=0}^{M-1}$ and ${\bf 0}_N$ denotes all zero vector of length $N$.
A matrix is denoted by a bold uppercase letter and an $M\times N$ matrix ${\bf A}$ with entries $a_n^m$, $0 \leq m < M$ and $0 \leq n < N$, is represented with its entries as ${\bf A}=[a_n^m]_{m=0,n=0}^{M-1,N-1}$. The $m$th row and the $n$th column vectors of $\bf A$ are denoted by ${\bf a}^m$ and ${\bf a}_n$, respectively.
Moreover, we let ${\bf A}^\ast$ and ${\bf A}^H$ denote the complex conjugate and Hermitian transpose of $\bf A$, respectively. 
An indexed set is a set of numbered elements and is denoted by outline letters as $\mathbb{S}=\{s_n\}_{n=0}^{N-1}$. $|\mathbb{S}|$ denotes the size of $\mathbb{S}$. Furthermore, in the notations for vectors, matrices, and sets, we occasionally omit the range of indices when it is obvious, {\it e.g.}, $\{s_n\}_{n=0}^{N-1}=\{s_n\}$.

\section{Definitions}

For two sequences ${\bf s}$ and ${\bf s}^\prime$ with lengths $L$ and $L^\prime$, respectively, the {\it aperiodic correlation}, or {\it correlation} simply, is given by
 \begin{eqnarray}
 R_{ {\bf s}, {\bf s}^\prime} ( \tau ) := \sum_{l=0}^{L-1} s(l)[s^\prime( l+\tau )]^\ast
 \label{corrfun}
 \end{eqnarray}
If ${\bf s} \neq {\bf s}^\prime$, it is the {\it cross-correlation} of ${\bf s}$ and $\bf s^\prime$ and, if ${\bf s} = {\bf s}^\prime$, it is the {\it auto-correlation} of ${\bf s}$ and is simply denoted by $R_{ \bf s} ( \tau )$. $E_{\bf s}:=R_{\bf s}(0)$ is the energy of $\bf s$.
%
When $L^\prime = L$, the {\it periodic correlation} $\tilde{R}_{ {\bf s}, {\bf s}^\prime}$ is given by the same expression as (\ref{corrfun}) except that $l+\tau$ is replaced by $[l+\tau]_L$ in (\ref{corrfun}).
Between the periodic and aperiodic definitions, we note the identity
\begin{eqnarray}
\tilde{R}_{{\bf s},{\bf s}^\prime}(\tau) = R_{{\bf s},{\bf s}^\prime}(\tau) +[1-\delta(\tau)]R_{{\bf s},{\bf s}^\prime}(\tau-L) 
\label{relcorr}
\end{eqnarray}

We call an indexed set $\mathbb{S}$ consisting of $N$ sequences of length $L$ an $(N,L)$-{\it sequence set} (SS) and call $L$ the length of $\mathbb{S}$. 
For given $(N,L)$-SS $\mathbb{S}=\{{\bf s}_n\}$ and $(N,L^\prime)$-SS $\mathbb{S}^\prime=\{{\bf s}^\prime_n\}$, we define the {\it correlation sum} as
\begin{eqnarray}
{\cal R}_{\mathbb{S},\mathbb{S}^\prime}(\tau) := \sum_{n=0}^{N-1}R_{{\bf s}_n,{\bf s}_n^\prime}(\tau)
\label{Csum}
\end{eqnarray}
If $\mathbb{S} \neq \mathbb{S}^\prime$, ${\cal R}_{\mathbb{S},\mathbb{S}^\prime}(\tau)$ is called the {\it cross-correlation sum} of $\mathbb{S}$ and $\mathbb{S}^\prime$  and, if $\mathbb{S} = \mathbb{S}^\prime$, it is called the {\it auto-correlation sum} of $\mathbb{S}$ and denoted by ${\cal R}_{\mathbb{S}}(\tau)$. We call $E_{\mathbb{S}}={\cal R}_{\mathbb{S}}(0)$ the energy of $\mathbb{S}$. The {\it periodic correlation sum} $\tilde{\cal R}_{\mathbb{S},\mathbb{S}^\prime}(\tau)$ is also introduced in a similar manner.

An $(N,L)$-SS $\mathbb{S}$ is called a {\it complementary set} (CS) and denoted by $(N,L)$-CS if the auto-correlation sum of $\mathbb{S}$ is zero except for zero-shift, {\it i.e.},
\begin{eqnarray}
{\cal R}_{\mathbb{S}}(\tau) = E_{\mathbb{S}}\delta(\tau)
\label{CSfun}
\end{eqnarray}
where $\delta({\bf a})$ is {\it Kronecker's delta function}.
\begin{example}
Assuming the convention that the signs $\pm$ mean $\pm1$, we let ${\bf s}_0^0=(+++-)$ and ${\bf s}_1^0= (+-++)$. Then the auto-correlation sum of $\mathbb{S}^0=\{{\bf s}_0^0,{\bf s}_1^0\}$ gives, in a vector form,
\begin{eqnarray}
\left({\cal R}_{\mathbb{S}^0}(\tau)\right)_{\tau=-3}^{3}&= & \left(R_{{\bf s}_0^0}(\tau)+R_{{\bf s}_1^0}(\tau)\right)_{\tau=-3}^{3}\nonumber\\
& = & (-1,0,1,4,1,0,-1)+(1,0,-1,4,-1,0,1)\nonumber\\
& = & (0,0,0,8,0,0,0)
\label{CSE}
\end{eqnarray}
Hence, $\mathbb{S}_0$ is a $(2,4)$-CS.
\end{example}

We introduce a collection of SSs. Given $(N,L^{(m)})$-SSs $\mathbb{S}^m=\{{\bf s}_n^m\}_{n=0}^{N-1}, m = 0,1,\cdots,M-1$, we call the set ${\cal S}=\{\mathbb{S}^m\}_{m=0}^{M-1}$ an $\left(M,N,\mathbb{L}\right)$-{\it sequence family} (SF) with family size $M$ and length set $\mathbb{L} := \sqcup\{L^{(m)}\}_{m=0}^{M-1}$, where $\sqcup\{L^{(m)}\}_{m=0}^{M-1}$ denotes the smallest set that contains all $L^{(m)}$, $m = 0,1,\cdots, M-1$, {\it i.e.}, the set of all the distinct lengths in $\{L^{(m)}\}_{m=0}^{M-1}$. For the length set $\mathbb{L}$, let $\max(\mathbb{L})$ be the maximum value in $\mathbb{L}$. Although sequences in an SS are considered to have the same length, we allow different SSs in an SF to have different lengths. The SF is also represented by an $M\times N$ matrix form $\mathcal{S} =\left[{\bf s}_n^m\right]$ with sequences as its entries. 

An $\left(M,N,\mathbb{L}\right)$-SF $\mathcal{C}=\{\mathbb{C}^m\}$ is called a {\it complete complementary code} (CCC) and denoted by $(M,N,\mathbb{L})$-CCC if each $\mathbb{C}^m$ is an $(N,L^{(m)})$-CS with $L^{(m)}\in \mathbb{L}$ and every pair $\mathbb{C}^m,\mathbb{C}^{m^\prime} \in {\cal C}$ satisfies
\begin{eqnarray}
{\cal R}_{\mathbb{C}^m,\mathbb{C}^{m^\prime}}(\tau) = E_{\mathbb{C}^m}\delta(m-{m^\prime})\delta(\tau)
\label{CCCfun}
\end{eqnarray}

\begin{example}
In addition to the $(2,4)$-CS $\mathbb{S}^0$ in Example 1, we let $\mathbb{S}^1 = \{{\bf s}_0^1,{\bf s}_1^1\}$ with ${\bf s}^1_0=(++-+)$ and ${\bf s}^1_1= (+---)$, respectively. Then, $\mathbb{S}^1$ is also a $(2,4)$-CS since the auto-correlation sum of $\mathbb{S}^1$ gives
\begin{eqnarray}
\left({\cal R}_{\mathbb{S}^1}(\tau)\right)_{\tau = -3}^{3}& = & \left(R_{{\bf s}^1_0}(\tau)+R_{{\bf s}^1_1}(\tau)\right)_{\tau = -3}^{3}\nonumber\\
& = & (1,0,-1,4,-1,0,-1)+(-1,0,1,4,1,0,-1)\nonumber\\
& = & (0,0,0,8,0,0,0)
\label{CCCE1}
\end{eqnarray}
and the cross-correlation sum of $\mathbb{S}^0 $ and $\mathbb{S}^1$ is given by
\begin{eqnarray}
\left(\mathcal{R}_{\mathbb{S}^0,\mathbb{S}^1}(\tau)\right)_{\tau = -3}^{3}& = & \left(R_{{\bf s}^0_0,{\bf s}^1_0}(\tau)+R_{{\bf s}^0_1,{\bf s}^1_1}(\tau)\right)_{\tau = -3}^{3}\nonumber\\
& = & (-1,0,3,0,1,0,1)+(1,0,-3,0,-1,0,-1)\nonumber\\
& = & (0,0,0,0,0,0,0)
\label{CCCE2}
\end{eqnarray}
Therefore, $\mathcal{S} =\{\mathbb{S}^0,\mathbb{S}^1\}$  is a $(2,2,\{4\})$-CCC. It can be represented by a matrix form
\begin{eqnarray}
{\cal S} = \left[
\begin{array}{cc}
{\bf s}^0_0 & {\bf s}^0_1 \cr
{\bf s}^1_0 & {\bf s}^1_1
\end{array}
\right]
\label{Exp2}
\end{eqnarray}

\end{example}

We identify two SSs if they are the same except for sequence indexing, {\it e.g.}, $\{{\bf s}_0,{\bf s}_1\} = \{{\bf s}_0^\prime,{\bf s}_1^\prime\}$ if ${\bf s}_0^\prime = {\bf s}_1$ and ${\bf s}_1^\prime = {\bf s}_0$, and we identify two SFs if these are the same except for indexing of sequences and sets. For example, we identify $\left[
\begin{array}{cc}
{\bf s}^0_0 & {\bf s}^0_1 \cr
{\bf s}^1_0 & {\bf s}^1_1
\end{array}
\right],\left[
\begin{array}{cc}
{\bf s}^0_1 & {\bf s}^0_0 \cr
{\bf s}^1_1 & {\bf s}^1_0
\end{array}
\right]$, and $
\left[
\begin{array}{cc}
{\bf s}^1_0 & {\bf s}^1_1 \cr
{\bf s}^0_0 & {\bf s}^0_1
\end{array}
\right]$ with ${\cal S}$ in (\ref{Exp2}), but $
\left[
\begin{array}{cc}
{\bf s}^0_0 & {\bf s}^0_1 \cr
{\bf s}^1_1 & {\bf s}^1_0
\end{array}
\right]\neq {\cal S}$.

In \cite{Suehiro88-01J}, $N$-CO-SSs are introduced as materials to construct CCCs. In our framework, we take SFs as materials to construct CCCs since SF allows component SSs of distinct lengths. We say that a shift (either aperiodic or periodic) is an {\it $N$-shift} if it is a shift in $lN$ elements for an integer $l$. We call an $(M,1,\mathbb{L})$-SF ${\cal S}$ an {\it $N$-shift cross-orthogonal sequence family} ($N$-CO-SF) and write $(M,1,\mathbb{L})$-$N$-CO-SF if the auto-correlation sum of each SS (consisting of just one sequence) in $\cal S$ vanishes for all $N$-shifts except for zero-shift and the cross-correlation sum of each pair of SSs $\{{\bf s}^m\}$ and $\{{\bf s}^{m^\prime}\}$ in $\cal S$ vanishes for all $N$-shifts. The $\cal S$ consists of $M$ SSs each including just one sequence and is represented by an $M \times 1$ matrix. We note that the length of each SS of an $(M,1,\mathbb{L})$-$N$-CO-SF is divisible by $N$. It is not difficult to confirm that the sequences in Example 2 give two $(2,1,\{4\})$-$2$-CO-SFs ${\cal S}_0 = \left[
\begin{array}{c}
{\bf s}^0_0 \cr
{\bf s}^1_0
\end{array}
\right]$ and ${\cal S}_1 = \left[
\begin{array}{c}
{\bf s}^0_1 \cr
{\bf s}^1_1
\end{array}
\right]$. 

In the followings, we construct CCCs whose row consists of CSs and whose column consists of $N$-CO-SFs. There are two approach to construct such CCCs: 1) Construct CS (row) first and extend them to CCC and 2) construct $N$-CO-SFs (columns) first and extend to a CCC. Our approach shown in this paper belongs to the later.

\section{$N$-shift cross-orthogonal sequence family}
In this section, we first give a theoretical upper bound on the family size of an $N$-CO-SF and present generation and elongation methods to find the optimal $N$-CO-SFs in the sense of the bound. 

\subsection{An upper bound for $N$-CO-SF and a generation algorithm}

The following theorem is proved in Appendix I (see also \cite{Han07-03J}).
\begin{theorem}
Every $(M,1,\mathbb{L})$-$N$-CO-SF satisfies the bound $M \leq N$ and this bound is attainable.
\end{theorem}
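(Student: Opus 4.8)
The plan is to turn the hypothesis on $N$-shift correlation sums into a finite-dimensional orthogonality statement and then simply count dimensions. Since the length of each sequence ${\bf s}^m$ is divisible by $N$, I would attach to it the exponential sum $G^m(\theta) := \sum_{l} s^m(l)\,e^{il\theta}$ (a finite sum, as $s^m$ has finite support), and evaluate it at the $N$ points $\theta_k := 2\pi k/N$, $k = 0,1,\dots,N-1$. Assembling these samples gives a vector ${\bf g}^m := (G^m(\theta_k))_{k=0}^{N-1} \in \mathbb{C}^N$ for each $m$, and the whole argument rests on understanding the inner products $\langle {\bf g}^m,{\bf g}^{m'}\rangle$.

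The key step, and the conceptual heart of the proof, is the identity
\[
\sum_{k=0}^{N-1} G^m(\theta_k)\,\overline{G^{m'}(\theta_k)} \;=\; N\sum_{l} R_{{\bf s}^m,{\bf s}^{m'}}(lN).
\]
I would prove it by expanding the left-hand side as a double sum over sequence indices $a,b$ and invoking the root-of-unity orthogonality relation: $\sum_{k=0}^{N-1} e^{i2\pi k(a-b)/N}$ equals $N$ when $a\equiv b \pmod N$ and $0$ otherwise. Only the terms with $b=a+lN$ survive, and collecting them reproduces exactly the sum of aperiodic correlations evaluated at the $N$-shifts $lN$, as defined in (\ref{corrfun}). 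This converts the decimated ``$N$-shift'' structure into a clean, self-contained sum of frequency samples, avoiding any appeal to the CCC size bound proved later.

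Now I would feed in the $N$-CO-SF hypothesis. For $m\neq m'$ the cross-correlation vanishes at every $N$-shift, so the right-hand side is $0$; for $m=m'$ only the zero-shift term $R_{{\bf s}^m}(0)=E_{{\bf s}^m}$ survives, giving $N E_{{\bf s}^m}$. Hence $\langle {\bf g}^m,{\bf g}^{m'}\rangle = N E_{{\bf s}^m}\,\delta(m-m')$, so the $M$ vectors $\{{\bf g}^m\}$ are pairwise orthogonal in $\mathbb{C}^N$, each with squared norm $N E_{{\bf s}^m}>0$ since every sequence is nonzero. Pairwise-orthogonal nonzero vectors are linearly independent, so $M \leq \dim \mathbb{C}^N = N$, which is the asserted bound.

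For attainability I would exhibit a family meeting $M=N$. Take the $N$ rows of any $N\times N$ unitary matrix as sequences of length exactly $N$, for instance the DFT matrix $s^m(n)=e^{i2\pi mn/N}$, or a Hadamard matrix when $N$ is a power of two. For any nonzero $N$-shift the supports do not overlap, so every such correlation is identically $0$, while the zero-shift values reduce to the (vanishing) inner products of distinct orthogonal rows; thus this is an $(N,1,\{N\})$-$N$-CO-SF and the bound is tight. I expect the only delicate point to be the bookkeeping of index ranges in the double-sum/roots-of-unity step; once that identity is in hand, the dimension count and the attainability construction are both immediate.
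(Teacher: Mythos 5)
Your proof is correct, and it takes a genuinely different route from the paper's. The paper (Appendix I) zero-pads every sequence to the common length $lN$ with $l=\max(\mathbb{L})/N$, stacks all $l$ cyclic $N$-shifts of each padded sequence into an $lM\times lN$ matrix ${\bf G}$, uses the periodic/aperiodic identity (\ref{relcorr}) to show that the Gram matrix ${\bf G}{\bf G}^H$ is diagonal with positive diagonal, and concludes $lM\le lN$ from a rank count. You instead collapse each sequence to a \emph{single} vector in $\mathbb{C}^N$ by sampling its generating function at the $N$-th roots of unity (equivalently, by folding the sequence modulo $N$ and taking its $N$-point DFT), and your root-of-unity identity shows the inner product of two such vectors is exactly $N$ times the sum of the aperiodic correlations over all $N$-shifts --- which the $N$-CO-SF hypothesis makes $NE_{{\bf s}^m}\delta(m-m')$. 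Pairwise orthogonality of $M$ nonzero vectors in $\mathbb{C}^N$ then gives $M\le N$ directly. Your version buys economy: no zero-padding to a common length, no detour through periodic correlations, one vector per sequence rather than $l$ shifted rows, and it handles variable lengths transparently since the exponential sum is finite regardless of support. What the paper's heavier construction buys is reusable machinery: essentially the same matrix ${\bf G}$ and rank argument are recycled in Appendix IV to prove the CCC bound of Theorem 4. Your attainability argument (rows of a unitary matrix, non-overlapping supports for nonzero $N$-shifts) matches the paper's remark in Section III. The only implicit assumption you share with the paper is that no sequence is identically zero, so that each ${\bf g}^m$ has positive norm; this is harmless given the intended meaning of a sequence family.
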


Because of the theorem, we call an $(M,1,\mathbb{L})$-$N$-CO-SF {\it optimal} if $M = N$ holds. 

We next consider construction of the optimal $N$-CO-SFs. Our construction begin with unitary-like matrices. Let us call an $N \times N$ matrix ${\bf U}_N$ a {\it unitary-like matrix} if it satisfies ${\bf U}_N {\bf U}_N^H = {\bf U}_N^H {\bf U}_N = \alpha {\bf I}_N$ for some $\alpha > 0$, where ${\bf I}_N$ denotes the $N\times N$ identity matrix. Obviously, ${\bf U}_N$ is a unitary matrix if $\alpha = 1$. An example of the unitary-like matrix is the $N$-dimensional discrete Fourier transform (DFT) matrix which is denoted by ${\bf F}_N=[W_N^{mn}]_{m=0,n=0}^{N-1,N-1}$ with $W_N=\exp(-2j\pi/N)$. Another one is the $N$-dimensional Walsh-Hadamard matrix ${\bf H}_N$ defined recursively
\begin{eqnarray*}
{\bf H}_{2^m} := \left[
\begin{array}{cc}
{\bf H}_{2^{m-1}} & {\bf H}_{2^{m-1}} \cr
{\bf H}_{2^{m-1}} & -{\bf H}_{2^{m-1}} 
\end{array}
\right]
\end{eqnarray*}
with 
${\bf H}_{1} = \left[~1~\right]$. let us denote the $m$th row vector of ${\bf F}_N$ and ${\bf H}_N$ by ${\bf f}_N^m$ and ${\bf h}_N^m$, respectively. We note that rows of an $N \times N$ unitary-like matrix ${\bf U}_N$ constitute an $(N,1,\{N\})$-$N$-CO-SF and, if we consider each row as a collection of length-$1$ sequences, ${\bf U}_N$ can be regarded as an $(N,N,\{1\})$-CCC.

Now, before the description of a general construction method, we consider an example of a $2$-CO-SF. 
\begin{example}
Let $\mathbb{A}=\{{\bf h}_2^m\}_{m=0}^{1}$ be a $(2,2)$-SS consisting of rows of ${\bf H}_2$ and consider a $2\times 2$ unitary-like matrix ${\bf U}_{2} = {\bf H}_2$. Then, we can generate an optimal $(2,1,\{4\})$-$2$-CO-SF as
\begin{eqnarray}
\mathcal{S} &=& \left[
\begin{array}{c}
{\bf s}^{0}\cr
{\bf s}^{1}
\end{array}
\right]
= \left[
\begin{array}{c}
(u_0^0{\bf h}_2^0~u_1^{0}{\bf h}_2^1)\cr
(u_0^{1}{\bf h}_2^0~u_1^{1}{\bf h}_2^1)
\end{array}
\right]
= \left[
\begin{array}{c}
(+++-)\cr
(++-+)
\end{array}
\right]
\label{OSD}
\end{eqnarray}
where $u_n^m$ denotes the $(m,n)$th entry of ${\bf U}_N$. 
\end{example}

For a vector ${\bf v}=(v_n)_{n=0}^{N-1}$ and an $(M,L)$-SS $\mathbb{A}=\{{\bf a}_m\}_{m=0}^{M-1}$ and for $K={\rm lcm}(M,N)$, let $\odot$ be the connection operator to construct a sequence of length $KL$ as
\begin{eqnarray}
{\bf s}={\bf v}\odot\mathbb{A} := \left(v_{[k]_N}{\bf a}_{[k]_{M}}\right)_{k=0}^{K-1}
\label{OD}
\end{eqnarray}
Then, the construction (\ref{OSD}) can be expressed as $\mathcal{S} = \left[
\begin{array}{c}
{\bf u}_{2}^0\odot\mathbb{A}\cr
{\bf u}_{2}^1\odot\mathbb{A}
\end{array}
\right]$, where ${\bf u}_{2}^m$ denotes the $m$th row of ${\bf U}_{2}$\footnote{As shown in this example, in fact, if an SS $\mathbb{A}$ consists of rows of a $N\times N$ unitary-like matrix is given, an $N$-CO-SS can be constructed by performing $\odot$ operation between rows of ${\bf U}_{|\mathbb{A}|}$ and $\mathbb{A}$.}. Our goal is to construct an $N$-CO-SF. To this end, we introduce multi-level partition and generalize the construction given in Example 3.

Let us consider a $Q$-level partition of a set $\mathbb{A}$, where $\mathbb{A}$ is partitioned into $P_1$ subsets $\mathbb{A}^{(p_1)},0 \leq p_1 < P_1$, at the 1st partition level and, for each ${\bf p}_1$, the subset $\mathbb{A}^{(p_1)}$ is further partitioned into $P_2^{(p_1)}$ subsets $\mathbb{A}^{(p_1,p_2)}, 0 \leq p_2 < P^{(p_1)}_2$, at the 2nd partition level and so on. In the resultant partition tree, a subset obtained at the $q$th partition level is indexed by a path vector ${\bf p}_q=(p_1,p_2,\cdots,p_{q})$ as $\mathbb{A}^{({\bf p}_q)}$. Let $\mathbb{P}_q$ be the path vector set consisting of all path vectors ${\bf p}_q$ leading to subsets at the $q$th partition level.

\begin{example}
Let $\mathbb{A}=\{{\bf f}_6^m\}_{m=0}^{5}$ be a $(6,6)$-SS consisting of rows of ${\bf F}_6$. We first consider 1-level partition of $\mathbb{A}$ into $\mathbb{A}^{(0)} = \{{\bf f}_6^{m}\}_{m=0}^{1}$ and $\mathbb{A}^{(1)} = \{{\bf f}_6^{m}\}_{m=2}^{5}$. Next, let  ${\bf U}_{|\mathbb{A}^{(0)}|}^{(0)} ={\bf H}_2$ and ${\bf U}_{|\mathbb{A}^{(1)}|}^{(1)}={\bf H}_4$ be two unitary-like matrices whose dimensions are equal to $|\mathbb{A}^{(0)}| = 2$ and $|\mathbb{A}^{(1)}| = 4$, respectively. Then, we can generate an optimal $(6,1,\{12,24\})$-$6$-CO-SF
\begin{eqnarray}
\mathcal{S} &=& \left[
\begin{array}{c}
{\bf s}^{(0,0)}\cr
{\bf s}^{(0,1)}\cr
{\bf s}^{(1,0)}\cr
{\bf s}^{(1,1)}\cr{\bf s}^{(1,2)}\cr
{\bf s}^{(1,3)}
\end{array}
\right]
= \left[
\begin{array}{c}
{\bf u}_{|\mathbb{A}^{(0)}|}^{(0),0}\odot\mathbb{A}^{(0)}\cr
{\bf u}_{|\mathbb{A}^{(0)}|}^{(0),1}\odot\mathbb{A}^{(0)}\cr
{\bf u}_{|\mathbb{A}^{(1)}|}^{(1),0}\odot\mathbb{A}^{(1)}\cr
{\bf u}_{|\mathbb{A}^{(1)}|}^{(1),1}\odot\mathbb{A}^{(1)}\cr
{\bf u}_{|\mathbb{A}^{(1)}|}^{(1),2}\odot\mathbb{A}^{(1)}\cr
{\bf u}_{|\mathbb{A}^{(1)}|}^{(1),3}\odot\mathbb{A}^{(1)}
\end{array}
\right]
=\left[
\begin{array}{c}
{\bf h}_2^{0}\odot\mathbb{A}^{(0)}\cr
{\bf h}_2^{1}\odot\mathbb{A}^{(0)}\cr
{\bf h}_4^{0}\odot\mathbb{A}^{(1)}\cr
{\bf h}_4^{1}\odot\mathbb{A}^{(1)}\cr
{\bf h}_4^{2}\odot\mathbb{A}^{(1)}\cr
{\bf h}_4^{3}\odot\mathbb{A}^{(1)}
\end{array}
\right]
= \left[
\begin{array}{l}
(+{\bf f}_6^{0}~+{\bf f}_6^{1})\cr
(+{\bf f}_6^{0}~-{\bf f}_6^{1})\cr
(+{\bf f}_6^{2}~+{\bf f}_6^{3}~+{\bf f}_6^{4}~+{\bf f}_6^{5})\cr
(+{\bf f}_6^{2}~-{\bf f}_6^{3}~+{\bf f}_6^{4}~-{\bf f}_6^{5})\cr
(+{\bf f}_6^{2}~+{\bf f}_6^{3}~-{\bf f}_6^{4}~-{\bf f}_6^{5})\cr
(+{\bf f}_6^{2}~-{\bf f}_6^{3}~-{\bf f}_6^{4}~+{\bf f}_6^{5})
\end{array}
\right]
\label{NCOSexam1}
\end{eqnarray}
where ${\bf u}_{|\mathbb{A}^{(p)}|}^{(p),m}$ denotes the $m$th row of ${\bf U}_{|\mathbb{A}^{(p)}|}^{(p)}$.
\end{example}

In general, if an $(N,N)$-SS $\mathbb{A}$ consisting of rows of a unitary-like matrix ${\bf U}_N$ is given, an optimal $N$-CO-SF can be generated by the following algorithm.
\begin{algorithm}
~
\begin{enumerate}
\item  {\it Partition}: Perform arbitrary $1$-level partition of $\mathbb{A}$ and let $\left\{\mathbb{A}^{({\bf p}_1)}\right\}_{{\bf p}_1 \in \mathbb{P}_1}$, $\mathbb{P}_1=\left\{(p_1)\right\}_{p_1=0}^{P_1-1}$, be the resultant partition.

\item {\it Specification}: For each subset $\mathbb{A}^{({\bf p}_1)}$, specify an arbitrary $|\mathbb{A}^{({\bf p}_1)}|\times |\mathbb{A}^{({\bf p}_1)}|$ unitary-like matrix ${\bf U}_{|\mathbb{A}^{({\bf p}_1)}|}^{({\bf p}_1)}$.

\item {\it Connection operation}: Let $\mathbb{P} = \{{\bf p} = ({\bf p}_1,m); 0 \leq m <|\mathbb{A}^{({\bf p}_1)}|,{\bf p}_1 \in \mathbb{P}_1\}$ and generate an $(N,1,\mathbb{L})$-SF by
\begin{eqnarray}
{\cal S} = \left\{\left\{{\bf s}^{\bf p}\right\}\right\}_{{\bf p} \in \mathbb{P}}
=  \left\{\left\{{\bf u}_{|\mathbb{A}^{({\bf p}_1)}|}^{({\bf p}_1),m}\odot\mathbb{ A}^{({\bf p}_1)}\right\}\right\}_{({\bf p}_1,m) \in \mathbb{P}}
\label{NCOSgen}
\end{eqnarray}
and $\mathbb{L} = \sqcup\{|\mathbb{A}^{({\bf p}_1)}|N\}_{{\bf p}_1 \in \mathbb{P}_1}$.
\end{enumerate}
\end{algorithm}

The following theorem is proved in Appendix II. 
\begin{theorem}
The $(N,1,\mathbb{L})$-SF $\mathcal{S}$ constructed by Algorithm 1 is an optimal $N$-CO-SF.
\end{theorem}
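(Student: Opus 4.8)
The plan is to verify directly the two requirements in the definition of an optimal $N$-CO-SF: that the family size of ${\cal S}$ equals $N$, and that the prescribed $N$-shift auto- and cross-correlation sums vanish. The size is immediate: since the partition $\{\mathbb{A}^{({\bf p}_1)}\}_{{\bf p}_1\in\mathbb{P}_1}$ exhausts the $N$ rows of ${\bf U}_N$ without repetition, the family ${\cal S}=\{\{{\bf s}^{\bf p}\}\}_{{\bf p}\in\mathbb{P}}$ has $|\mathbb{P}|=\sum_{{\bf p}_1\in\mathbb{P}_1}|\mathbb{A}^{({\bf p}_1)}|=|\mathbb{A}|=N$ members. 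Hence, once the $N$-CO-SF property is established, Theorem~1 makes ${\cal S}$ optimal.

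For the correlation property I would fix two members ${\bf s}={\bf s}^{({\bf p}_1,m)}$ and ${\bf s}'={\bf s}^{({\bf p}_1',m')}$ and evaluate the aperiodic correlation at an arbitrary $N$-shift $\tau=lN$. Each member is a concatenation of $|\mathbb{A}^{({\bf p}_1)}|$ blocks of length $N$, the $k$th block of ${\bf s}$ being $u_k^{({\bf p}_1),m}{\bf a}_k^{({\bf p}_1)}$; note that in the $\odot$ operation here $K=\lcm$ collapses to the subset size, so the block index simply runs over the subset. Splitting the summation index as $i=kN+j$ with $0\le j<N$, the shift by $lN$ aligns block $k$ of ${\bf s}$ with block $k+l$ of ${\bf s}'$ (terms with $k+l$ out of range being zero), and the correlation factors as
\[
R_{{\bf s},{\bf s}'}(lN)
= \sum_{k} u_k^{({\bf p}_1),m}\bigl[u_{k+l}^{({\bf p}_1'),m'}\bigr]^{\ast}
\sum_{j=0}^{N-1} a_k^{({\bf p}_1)}(j)\bigl[a_{k+l}^{({\bf p}_1')}(j)\bigr]^{\ast},
\]
where $a_k^{({\bf p}_1)}(j)$ is the $j$th entry of the $k$th row in $\mathbb{A}^{({\bf p}_1)}$. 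The crux is that the inner sum is precisely the Hermitian inner product of two full rows of ${\bf U}_N$, hence equals $\alpha$ when those rows coincide and $0$ otherwise, by ${\bf U}_N{\bf U}_N^H=\alpha{\bf I}_N$.

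A short case analysis then finishes the argument. Because the partition uses each row of ${\bf U}_N$ exactly once, the positions $({\bf p}_1,k)$ and $({\bf p}_1',k+l)$ give the same row only when ${\bf p}_1={\bf p}_1'$ and $l=0$; in every other case the inner product vanishes for all $k$, so $R_{{\bf s},{\bf s}'}(lN)=0$. This disposes at once of the cross-correlation between distinct subsets (${\bf p}_1\ne{\bf p}_1'$) and of all nonzero $N$-shifts ($l\ne0$), covering both the vanishing of auto-correlations away from zero shift and of cross-correlations at nonzero shift. In the remaining case ${\bf p}_1={\bf p}_1'$, $l=0$, the expression collapses to $\alpha\sum_{k}u_k^{({\bf p}_1),m}[u_k^{({\bf p}_1),m'}]^{\ast}$, which is $\alpha$ times the inner product of rows $m$ and $m'$ of the small matrix ${\bf U}_{|\mathbb{A}^{({\bf p}_1)}|}^{({\bf p}_1)}$; its unitary-like property makes this the (nonzero) energy when $m=m'$ and zero when $m\ne m'$, exactly matching the required zero-shift behaviour.

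I expect the principal difficulty to be bookkeeping rather than ideas: one must keep straight the two distinct roles of the unitary-like hypothesis---the large matrix ${\bf U}_N$ governs the inner, length-$N$ block inner products, while the small matrices ${\bf U}^{({\bf p}_1)}$ govern the outer sum over blocks---and must track the out-of-range boundary blocks so that the displayed factorization is exact. I would also confirm the periodic $N$-shift case: for subsets of equal length it follows from identity~(\ref{relcorr}) together with the same block computation run with cyclic indices, whereas for subsets of unequal length the periodic correlation sum is undefined and only the aperiodic condition applies.
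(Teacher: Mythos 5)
Your proof is correct and follows essentially the same route as the paper's Appendix II: reduce optimality to the $N$-CO-SF property via Theorem~1 (the family size $N$ being immediate from the partition), then split the correlation at an $N$-shift into an outer sum governed by the small unitary-like matrix ${\bf U}^{({\bf p}_1)}$ and an inner length-$N$ block inner product governed by ${\bf U}_N$. Your explicit handling of the out-of-range boundary blocks and the periodic-shift case is slightly more careful bookkeeping than the paper's, but the argument is the same.
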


Although Theorem 2 provides us a method to generate an optimal $N$-CO-SF, the sequence lengths $|\mathbb{A}^{({\bf p}_1)}|N$ are no larger than $N^2$ since $\mathbb{A}^{({\bf p}_1)}$ is a subset of the $(N,N)$-SS $\mathbb{A}$. 
To construct a longer $N$-CO-SF, we need an elongation algorithm discussed next.

\subsection{Elongation of an $N$-CO-SF}

We begin with an example again. 
 \begin{example}
 Let us consider elongation of the $(6,1,\{12,24\})$-$6$-CO-SF $\mathcal{S}$ constructed in Example 4. Now, we let $\mathbb{A}$ be the set of all the sequences in ${\cal S}$, {\it i.e.}, $\mathbb{A} = \{{\bf s}^{(0,0)},{\bf s}^{(0,1)},{\bf s}^{(1,0)},{\bf s}^{(1,1)},{\bf s}^{(1,2)},{\bf s}^{(1,3)}\}$. Compared with the generation algorithm, the elongation algorithm performs two-level partition: At the $1$st partition level, $\mathbb{A}$ is divided into two SSs according to the lengths of sequences, and each of the resultant SSs is arbitrarily divided at the $2$nd partition level. Thus, from $\left|\{12,24\}\right|=2$, we have subsets $\mathbb{A}^{(0)} = \left\{{\bf s}^{(0,m)}\right\}_{m=0}^{1}$ consisting of the sequences of length 12 and  $\mathbb{A}^{(1)} = \left\{{\bf s}^{(1,m)}\right\}_{m=0}^{3}$ consisting of the sequences of length 24 at the $1$st partition level. 
At the $2$nd partition level, we may let $\mathbb{A}^{(0,0)} = \mathbb{A}^{(0)} = \{{\bf s}^{(0,m)}\}_{m=0}^{1}$ and may divide $\mathbb{A}^{(1)}$ into two subsets: $\mathbb{A}^{(1,0)} = \{{\bf s}^{(1,m)}\}_{m=0}^{1}$ and $\mathbb{A}^{(1,1)}=\{{\bf s}^{(1,m)}\}_{m=2}^{3}$. The path vector set of such partition is $\mathbb{P}_2 = \{(0,0),(1,0),(1,1)\}$. Next, for each ${\bf p}_2 \in \mathbb{P}_2$, we specify an $|\mathbb{A}^{({\bf p}_2)}|$-CO-SFs. In this example, since $|\mathbb{A}^{({\bf p}_2)}|=2$ for all ${\bf p}_2\in \mathbb{P}_2$, we need three $2$-CO-SFs. Here, we assume the following $2$-CO-SFs:  $(2,1,\{2\})$-$2$-CO-SFs $\mathcal{V}^{(0,0)}$ and $\mathcal{V}^{(1,0)}$ both consisting of the rows of a unitary-like matrix ${\bf H}_2$, and a $(2,1,\{4\})$-$2$-CO-SF $\mathcal{V}^{(0,0)}$ given in Example 3, {\it i.e.},
\begin{eqnarray*}
\mathcal{V}^{(0,0)} = \mathcal{V}^{(1,0)}= \left[
\begin{array}{c}
(++) \cr
(+-)
\end{array}
\right];~
\mathcal{V}^{(1,1)} = \left[
\begin{array}{c}
(+++-)\cr
(++-+)
\end{array}
\right]
\end{eqnarray*}
Let ${\bf v}^{(p_1,p_2,m)}$ be the sequence in the $m$th SS of ${\cal V}^{(p_1,p_2)}$ (there is just one for each $m$) and let $\cal S$ is be a $(6,1,\{24,48,96\})$-$6$-CO-SF given by
\begin{eqnarray*}
\mathcal{S}&=& \left[
\begin{array}{c}
{\bf v}^{(0,0),0}\odot\mathbb{A}^{(0,0)}\cr
{\bf v}^{(0,0),1}\odot\mathbb{A}^{(0,0)}\cr
{\bf v}^{(1,0),0}\odot\mathbb{A}^{(1,0)}\cr
{\bf v}^{(1,0),1}\odot\mathbb{A}^{(1,0)}\cr
{\bf v}^{(1,1),0}\odot\mathbb{A}^{(1,1)}\cr
{\bf v}^{(1,1),1}\odot\mathbb{A}^{(1,1)}
\end{array}
\right]
=
\left[ 
\begin{array}{l}
(+{\bf s}^{(0,0)}~+{\bf s}^{(0,1)})\cr
(+{\bf s}^{(0,0)}~-{\bf s}^{(0,1)})\cr
(+{\bf s}^{(1,0)}~+{\bf s}^{(1,1)})\cr
(+{\bf s}^{(1,0)}~-{\bf s}^{(1,1)})\cr
(+{\bf s}^{(1,2)}~+{\bf s}^{(1,3)}~+{\bf s}^{(1,2)}~-{\bf s}^{(1,3)})\cr
(+{\bf s}^{(1,2)}~+{\bf s}^{(1,3)}~-{\bf s}^{(1,2)}~+{\bf s}^{(1,3)})
\end{array}
\right]
\end{eqnarray*}
\end{example}

\begin{figure}[htbp]
\begin{center}
\includegraphics[scale = .45]{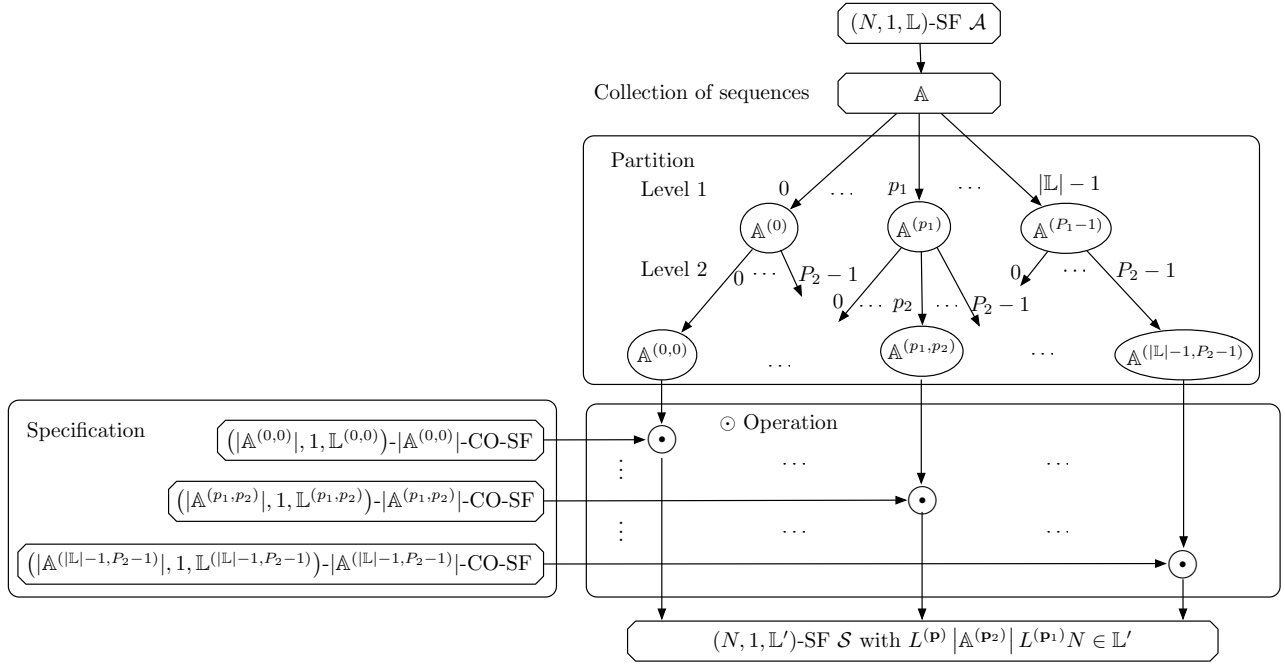}
\caption{Elongation algorithm of an $N$-CO-SF}
\label{Algorithm}
\end{center}
\end{figure}

In general, for a given $(N,1,\mathbb{L})$-$N$-CO-SF ${\cal A}$ with $\mathbb{L} = \{l^{(p_1)}N\}_{p_1=0}^{|\mathbb{L}|-1}$, the following algorithm that is illustrated in Fig. \ref{Algorithm} provides elongation of $\cal A$.

\begin{algorithm}
~
\begin{enumerate}
\item {\it Partition}: Let $\mathbb{A}$ be the set of all the sequences in ${\cal A}$. We consider a $2$-level partition of $\mathbb{A}$ such that, at the 1st level, the set $\mathbb{A}$ is partitioned to $|\mathbb{L}|$ SSs $\mathbb{A}^{({\bf p}_1)}$, ${\bf p}_1 \in \mathbb{P}_1 = \{(p_1)\}_{p_1=0}^{|\mathbb{L}|-1}$, according to sequence lengths and, at the second level, each $\mathbb{A}^{({\bf p}_1)}$, an $(|\mathbb{A}^{({\bf p}_1)}|,l^{({\bf p}_1)}N)$-SS, is further partitioned into $P_2^{({\bf p}_1)}$ $(|\mathbb{A}^{({\bf p}_2)}|,l^{({\bf p}_1)}N)$-SSs $\mathbb{A}^{({\bf p}_2)}, {\bf p}_2 \in \{({\bf p}_1,p_2);~0 \leq p_2<P_2^{({\bf p}_1)}\}$.

\item {\it Specification}: For each ${\bf p}_2 \in \mathbb{P}_2$, let $\mathcal{V}^{({\bf p}_2)}=\left\{\{{\bf v}^{({\bf p}_2),m}\}\right\}_{m=0}^{\left|\mathbb{A}^{({\bf p}_2)}\right|-1}$ be an $\left(|\mathbb{A}^{({\bf p}_2)}|,1,\mathbb{L}^{({\bf p}_2)}\right)$-$|\mathbb{A}^{({\bf p}_2)}|$-CO-SF, the $m$th SS of which has sequence length $l^{({\bf p}_2,m)}|\mathbb{A}^{({\bf p}_2)}|$ with a positive integer $l^{({\bf p}_2,m)}$.

\item {\it Connection operation}: Let $\mathbb{P}= \{({\bf p}_2,m);0\leq m < |\mathbb{A}^{({\bf p}_2)}|,{\bf p}_2 \in \mathbb{P}_2\}$ and let $\mathcal{S}$ be an $(N,1,\mathbb{L}^\prime)$-SF, $\mathbb{L}^\prime = \sqcup\{l^{({\bf p})}\left|\mathbb{A}^{({\bf p}_2)}\right|l^{({\bf p}_1)}N\}_{{\bf p} \in\mathbb{P}}$, given by 
\begin{eqnarray}
\mathcal{S}= \left\{\left\{{\bf s}^{({\bf p})}\right\}\right\}_{{\bf p} \in \mathbb{P}}= \left\{\left\{{\bf v}^{({\bf p}_2),m}\odot \mathbb{A}^{({\bf p}_2)}\right\}\right\}_{{\bf p} \in \mathbb{P}}
\label{NCOSexp}
\end{eqnarray}
where we used the convention that ${\bf p} = ({\bf p}_2,m) \in \mathbb{P}$ specifies ${\bf p}_2 \in \mathbb{P}_2$ for some $m$ and ${\bf p}_2 = ({\bf p}_1,p_2)$ specifies ${\bf p}_1 \in \mathbb{P}_1$ for some $p_2$.
\end{enumerate}
\end{algorithm}

The following theorem is proved in Appendix III.
\begin{theorem}
If all the sequences in $\mathbb{A}^{({\bf p}_2)}$ have the same energy for each ${\bf p}_2$, the $(N,1,\mathbb{L}^\prime)$-SF $\mathcal{S}$ constructed by Algorithm 2 is an $N$-CO-SF.
\end{theorem}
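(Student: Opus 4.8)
The plan is to establish the two defining properties of an $N$-CO-SF for $\mathcal{S}$ directly from the correlation behaviour of the connection operator $\odot$ in (\ref{OD}), fed by two orthogonality resources: (R1) since $\mathcal{A}$ is an $N$-CO-SF, distinct sequences of $\mathbb{A}$ have zero cross-correlation at every $N$-shift and each sequence has zero auto-correlation at every nonzero $N$-shift; and (R2) each $\mathcal{V}^{({\bf p}_2)}$ is an $|\mathbb{A}^{({\bf p}_2)}|$-CO-SF. Because every SS of $\mathcal{S}$ holds a single sequence, it suffices to show, for an arbitrary $N$-shift $\tau=tN$ and an arbitrary pair ${\bf p}=({\bf p}_2,m)$, ${\bf p}'=({\bf p}_2',m')$ in $\mathbb{P}$, that $R_{{\bf s}^{({\bf p})},{\bf s}^{({\bf p}')}}(\tau)=E_{{\bf s}^{({\bf p})}}\delta({\bf p}-{\bf p}')\delta(\tau)$. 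Writing ${\bf a}_j$ for the $j$th sequence of $\mathbb{A}^{({\bf p}_2)}$ (length $l^{({\bf p}_1)}N$), ${\bf a}'_{j'}$ for the $j'$th of $\mathbb{A}^{({\bf p}_2')}$ (length $l^{({\bf p}_1')}N$), and $M=|\mathbb{A}^{({\bf p}_2)}|$, $M'=|\mathbb{A}^{({\bf p}_2')}|$, by (\ref{OD}) the sequence ${\bf s}^{({\bf p})}$ is the concatenation of the pieces $v^{({\bf p}_2),m}(k)\,{\bf a}_{[k]_M}$, each of length $l^{({\bf p}_1)}N$.

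The engine is a piecewise decomposition. Every piece of ${\bf s}^{({\bf p})}$ has length $l^{({\bf p}_1)}N$ and every piece of ${\bf s}^{({\bf p}')}$ has length $l^{({\bf p}_1')}N$, both multiples of $N$, and $\tau=tN$; hence, splitting $\sum_l s(l)[s'(l+\tau)]^\ast$ according to which piece of each sequence the indices $l$ and $l+\tau$ land in (the aperiodic boundary being handled by the zero-padding convention, in the spirit of the head/tail bookkeeping behind (\ref{relcorr})), I would obtain
\begin{eqnarray*}
R_{{\bf s}^{({\bf p})},{\bf s}^{({\bf p}')}}(\tau) &=& \sum_{k,k'} v^{({\bf p}_2),m}(k)\,\big[v^{({\bf p}_2'),m'}(k')\big]^\ast\, R_{{\bf a}_{[k]_M},{\bf a}'_{[k']_{M'}}}\big(\sigma_{k,k'}\big),
\end{eqnarray*}
where $\sigma_{k,k'}=(kl^{({\bf p}_1)}-k'l^{({\bf p}_1')}+t)N$. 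The decisive point is that $\sigma_{k,k'}$ is always a multiple of $N$, so that every inner correlation is evaluated at an $N$-shift and (R1) can act.

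With this identity the argument splits into two cases. If ${\bf p}_2\neq{\bf p}_2'$ then $\mathbb{A}^{({\bf p}_2)}$ and $\mathbb{A}^{({\bf p}_2')}$ are disjoint sub-SSs of $\mathbb{A}$, so ${\bf a}_{[k]_M}$ and ${\bf a}'_{[k']_{M'}}$ are always distinct sequences of the $N$-CO-SF $\mathcal{A}$; every inner correlation then vanishes by (R1) and $R_{{\bf s}^{({\bf p})},{\bf s}^{({\bf p}')}}(\tau)=0$, which in particular covers all pairs of differing lengths. If ${\bf p}_2={\bf p}_2'$ then $l^{({\bf p}_1)}=l^{({\bf p}_1')}$, $M=M'$, and ${\bf a}'={\bf a}$; here I invoke the hypothesis that all sequences of $\mathbb{A}^{({\bf p}_2)}$ share one energy $E$, so that (R1) gives $R_{{\bf a}_{[k]_M},{\bf a}_{[k']_M}}(\sigma_{k,k'})=E\,\delta([k]_M-[k']_M)\,\delta((k-k')l^{({\bf p}_1)}+t)$. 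The two Kronecker factors force $k'=k+sM$ with $t=sMl^{({\bf p}_1)}$, collapsing the double sum to $E\,R_{{\bf v}^{({\bf p}_2),m},{\bf v}^{({\bf p}_2),m'}}(sM)$. Finally, since $\mathcal{V}^{({\bf p}_2)}$ is an $M$-CO-SF, (R2) makes this $E_{\bf v}\delta(m-m')\delta(s)$: zero for $m\neq m'$, and for $m=m'$ supported only at $s=0$, i.e. $\tau=0$. Assembling the cases shows $\mathcal{S}$ is an $N$-CO-SF.

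The main obstacle I anticipate is the piecewise decomposition when $l^{({\bf p}_1)}\neq l^{({\bf p}_1')}$, i.e. when the two sequences are built from sub-SSs of different lengths: the pieces of ${\bf s}^{({\bf p})}$ and ${\bf s}^{({\bf p}')}$ do not align, and decomposing at a finer (length-$N$) granularity would produce inner products that need not vanish individually. The resolution, and the step that must be executed with care, is to decompose at the piece level (lengths $l^{({\bf p}_1)}N$ and $l^{({\bf p}_1')}N$) so that each contribution is a genuine correlation $R_{{\bf a}_{[k]_M},{\bf a}'_{[k']_{M'}}}$ evaluated at the $N$-shift $\sigma_{k,k'}$, together with correct treatment of the aperiodic boundary via zero-padding. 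The only other delicate point is the equal-energy hypothesis, which is precisely what lets $E$ be pulled out of the collapsed sum in the case ${\bf p}_2={\bf p}_2'$; without it the residual weighting by the individual energies $E_{{\bf a}_j}$ would block the reduction to $R_{{\bf v}^{({\bf p}_2),m},{\bf v}^{({\bf p}_2),m'}}$.
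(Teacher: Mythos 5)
Your proposal is correct and follows essentially the same route as the paper's Appendix III proof: decompose each sequence into its zero-padded pieces, reduce every term of the resulting double sum to a correlation of members of the original $N$-CO-SF evaluated at an $N$-shift, use the equal-energy hypothesis to factor out $E$, and collapse the surviving terms to a $|\mathbb{A}^{({\bf p}_2)}|$-shift correlation of the ${\bf v}$ sequences. Your explicit case split on ${\bf p}_2\neq{\bf p}_2'$ (covering misaligned pieces of different lengths) and your condition $t=sMl^{({\bf p}_1)}$ are, if anything, slightly cleaner than the paper's $\delta([x]_{\lcm(k,l_1)})$ bookkeeping.
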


We note, for each subset at the $1$st partition level, that its partition at the $2$nd partition level can be arbitrary and that different partitions result in different $N$-CO-SFs. For example, if we assume partition $\mathbb{A}^{(0,0)} = \{{\bf s}^{(0,m)}\}_{m=0}^{1}$, $\mathbb{A}^{(1,0)} = \{{\bf s}^{(1,0)}\}$, and $\mathbb{A}^{(1,1)}=\{{\bf s}^{(1,m)}\}_{m=1}^{3}$ with, respectively, $N^{(0,0)}=2$, $N^{(1,0)}=1$, and $N^{(1,1)}=3$, and if we consider, correspondingly, $2$-CO-SF, $1$-CO-SF, and $3$-CO-SF be $\mathcal{V}^{(0,0)} = \left\{(++),(+-)\right\},~\mathcal{V}^{(1,0)}=\{+\}$, and  $\mathcal{V}^{(1,1)} = \left\{(W_3,++),(+W_3+),(++W_3)\right\}$, then the expression (\ref{NCOSexp}) gives another optimal $(6,1,\{24,72\})$-$6$-CO-SF
\begin{eqnarray*}
\mathcal{S}
 =
\left[
\begin{array}{l}
(+{\bf s}^{(0,0)}~+{\bf s}^{(0,1)})\cr
(+{\bf s}^{(0,0)}~-{\bf s}^{(0,1)})\cr
(+{\bf s}^{(1,0)})\cr
(W_3{\bf s}^{(1,1)}~+{\bf s}^{(1,2)}~+{\bf s}^{(1,3)})\cr
(+{\bf s}^{(1,1)}~W_3{\bf s}^{(1,2)}~+{\bf s}^{(1,3)})\cr
(+{\bf s}^{(1,1)}~+{\bf s}^{(1,2)}~W_3{\bf s}^{(1,3)})
\end{array}
\right]
\end{eqnarray*}

\section{complete complementary codes}

In this section, we study the least upper bound on the family size of a CCC as well as its generation and enlargement.

\subsection{Upper bound and generation algorithm for CCC}
As noted before, a CCC in a matrix form has a certain similarity to a unitary-like matrix.
The following theorem is parallel to the fact that every $M\times N$ matrix $\bf G$ with the property ${\bf G}{\bf G}^H = \alpha {\bf I}_M$ for $\alpha > 0$ satisfies $M \leq N$ and is proved in Appendix IV (see also \cite{Han05-09J}).
\begin{theorem}
Every $(M,N,\mathbb{L})$-CCC satisfies $M \leq N$.
\end{theorem}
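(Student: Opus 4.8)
The plan is to collapse the correlation-sum condition into a single matrix orthogonality relation by passing to the zero-frequency (DC) component of each sequence, thereby reducing the theorem to the elementary matrix fact cited just above its statement. First I would attach to every sequence $\mathbf{s}_n^m$ its DC value $g_n^m := \sum_l s_n^m(l)$, i.e.\ the sum of all its entries (equivalently the value of its discrete-time Fourier transform at $\omega=0$), and assemble these into the $M\times N$ matrix $\mathbf{G} = [g_n^m]_{m=0,n=0}^{M-1,N-1}$. This is well defined even though distinct $\mathbb{C}^m$ may have distinct lengths, since by the paper's convention every sequence is a finitely supported function on $\mathbb{Z}$; thus the variable-length feature that motivates the paper poses no difficulty for this particular bound.

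The crucial observation is that summing an aperiodic correlation over all shifts factors through the DC values: using $\sum_\tau s'(l+\tau)=\sum_k s'(k)$ one gets $\sum_\tau R_{\mathbf{s},\mathbf{s}'}(\tau) = \big(\sum_l s(l)\big)\big(\sum_k s'(k)\big)^\ast$. Applying this termwise to the definition (\ref{Csum}) of the correlation sum and summing over $n$ yields $\sum_\tau \mathcal{R}_{\mathbb{C}^m,\mathbb{C}^{m'}}(\tau) = \sum_n g_n^m (g_n^{m'})^\ast = (\mathbf{G}\mathbf{G}^H)_{m,m'}$. On the other hand, summing the defining relation (\ref{CCCfun}) over $\tau$ gives $\sum_\tau \mathcal{R}_{\mathbb{C}^m,\mathbb{C}^{m'}}(\tau) = E_{\mathbb{C}^m}\delta(m-m')$. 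Equating the two sides I obtain the single clean identity $\mathbf{G}\mathbf{G}^H = \mathrm{diag}(E_{\mathbb{C}^0},\ldots,E_{\mathbb{C}^{M-1}})$: the off-diagonal entries vanish because the cross-correlation sums vanish for \emph{all} $\tau$, while the $m$th diagonal entry is the energy $E_{\mathbb{C}^m}$ supplied by the complementary-set (auto-correlation) condition.

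To finish, note that each $E_{\mathbb{C}^m}=\mathcal{R}_{\mathbb{C}^m}(0)=\sum_n\sum_l |s_n^m(l)|^2>0$ for a nondegenerate CS, so $\mathbf{G}\mathbf{G}^H$ is a positive-definite $M\times M$ matrix and hence has rank $M$; since $\mathrm{rank}\,\mathbf{G}\le\min(M,N)$, this forces $M\le N$. Equivalently, rescaling the $m$th row of $\mathbf{G}$ by $1/\sqrt{E_{\mathbb{C}^m}}$ produces $\tilde{\mathbf{G}}$ with $\tilde{\mathbf{G}}\tilde{\mathbf{G}}^H=\mathbf{I}_M$, and the bound follows directly from the matrix fact quoted before the theorem. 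There is no deep obstacle here; the proof is a reduction, and the only step prone to slip is the bookkeeping that a single linear functional of $\tau$ (summation over all shifts, i.e.\ evaluation at $\omega=0$) already makes $\mathbf{G}\mathbf{G}^H$ full rank — one must verify carefully that after summation the diagonal entries are exactly the positive energies $E_{\mathbb{C}^m}$ and the off-diagonals are exactly $0$, matching the right-hand side of (\ref{CCCfun}), which is precisely what the CS and cross-orthogonality hypotheses provide.
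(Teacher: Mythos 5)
Your proof is correct, and it takes a genuinely different route from the paper's. The paper zero-pads every sequence to length $L=\max(\mathbb{L})$, assembles an $ML\times NL$ matrix whose rows are concatenations of cyclic shifts of the padded sequences, uses the identity (\ref{relcorr}) to convert the ideal aperiodic correlation sums into ideal periodic ones, and concludes $ML\le NL$ from the rank of the resulting Gram matrix. You instead apply a single linear functional in $\tau$ — summation over all shifts, i.e.\ evaluation of the Fourier transform at $\omega=0$ — which factors each aperiodic correlation into a product of DC values and collapses the entire CCC condition (\ref{CCCfun}) into the one relation $\mathbf{G}\mathbf{G}^H=\mathrm{diag}(E_{\mathbb{C}^0},\dots,E_{\mathbb{C}^{M-1}})$ for the $M\times N$ matrix of DC values; the identity $\sum_\tau R_{\mathbf{s},\mathbf{s}'}(\tau)=(\sum_l s(l))(\sum_k s'(k))^\ast$ is valid because the sum is finite and $l+\tau$ sweeps all of $\mathbb{Z}$ for each fixed $l$. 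Your version is shorter, needs no zero-padding or periodic-correlation bookkeeping, and handles the variable-length feature for free; it also makes transparent exactly what is used (nondegeneracy $E_{\mathbb{C}^m}>0$ to get full rank $M$, which you correctly flag, and which the paper's proof also tacitly assumes). What the paper's heavier construction buys is uniformity with its proof of Theorem 1, where the hypothesis constrains only $N$-shifts rather than all shifts, so the cyclic-shift matrix is genuinely needed there; for Theorem 4 the full all-shift hypothesis is available and your single-functional reduction loses nothing — indeed one could evaluate at any fixed frequency $\omega$ and argue identically.
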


We say that an $(M,N,\mathbb{L})$-CCC is optimal if $M = N$. A unitary-like matrix ${\bf U}_N$ is a special (or optimal) case of CCC satisfying $M=N$ and may be considered as an optimal CCC with length set $\mathbb{L} = \{1\}$.

We prove the following theorem in Appendix V.
\begin{theorem}
Given an $\left(N,1,\mathbb{L}\right)$-$N$-CO-SF $\mathcal{S}=\left\{\right\{{\bf s}^m\}\}_{m=0}^{N-1}$, let $\mathbb{S}^{(m)}$ be $(l^{(m)}N,1)$-SSs $\mathbb{S}^{(m)}= \{(s^m(l))\}_{l=0}^{l^{(m)}N-1}$, $0 \leq m < N$. Then,  for a unitary-like matrix ${\bf U}_N$, an $(N,N,\mathbb{L})$-SF constructed as
\begin{eqnarray}
\mathcal{C} = \left\{\left\{{\bf c}_n^m\right\}_{n=0}^{N-1}\right\}_{m=0}^{N-1} =  \left\{\left\{{\bf u}_N^{n}\odot\mathbb{S}^{(m)}\right\}_{n=0}^{N-1}\right\}_{m=0}^{N-1}
\label{CCCgen}
\end{eqnarray}
 is a CCC.
\end{theorem}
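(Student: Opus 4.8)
The plan is to verify the two defining properties in (\ref{CCCfun}) by a direct computation of the correlation sum $\mathcal{R}_{\mathbb{C}^m,\mathbb{C}^{m^\prime}}(\tau)$, reducing it to the $N$-shift properties already assumed for the underlying $N$-CO-SF $\mathcal{S}$. First I would unwind the definitions of $\odot$ and $\mathbb{S}^{(m)}$ to get an entrywise formula for each code sequence. Since $K=\lcm(l^{(m)}N,N)=l^{(m)}N$, the connection operator (\ref{OD}) gives $c_n^m(k)=u_{[k]_N}^n s^m(k)$ for $0\le k<l^{(m)}N$; and because $s^m(k)=0$ outside this range under the zero-padding convention, this identity in fact holds for every integer $k$. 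In words, ${\bf c}_n^m$ is just ${\bf s}^m$ modulated period-$N$ by the $n$th row of ${\bf U}_N$.

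Next I would substitute this formula into (\ref{Csum}) and interchange the order of summation, obtaining
\begin{eqnarray*}
\mathcal{R}_{\mathbb{C}^m,\mathbb{C}^{m^\prime}}(\tau)=\sum_{k}s^m(k)[s^{m^\prime}(k+\tau)]^\ast\sum_{n=0}^{N-1}u_{[k]_N}^n[u_{[k+\tau]_N}^n]^\ast .
\end{eqnarray*}
The inner sum over $n$ is precisely the (conjugated) inner product of the $[k]_N$th and $[k+\tau]_N$th columns of ${\bf U}_N$, so the unitary-like property ${\bf U}_N^H{\bf U}_N=\alpha{\bf I}_N$ collapses it to $\alpha\,\delta([k]_N-[k+\tau]_N)$.

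The crux of the argument is the observation that $[k]_N=[k+\tau]_N$ holds for \emph{every} $k$ exactly when $\tau$ is an $N$-shift and fails for every $k$ otherwise. Hence for any $\tau\not\equiv0\pmod N$ the entire correlation sum vanishes identically, which disposes of all non-$N$-shift lags at once; while for an $N$-shift $\tau=lN$ the sum reduces to $\alpha\,R_{{\bf s}^m,{\bf s}^{m^\prime}}(lN)$. At this point I would invoke the two hypotheses on $\mathcal{S}$: the cross-correlation $R_{{\bf s}^m,{\bf s}^{m^\prime}}(lN)$ vanishes for all $l$ when $m\neq m^\prime$, and the auto-correlation $R_{{\bf s}^m}(lN)$ vanishes for all $l\neq0$ while equalling $E_{{\bf s}^m}$ at $l=0$. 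Collecting cases yields
\begin{eqnarray*}
\mathcal{R}_{\mathbb{C}^m,\mathbb{C}^{m^\prime}}(\tau)=\alpha E_{{\bf s}^m}\delta(m-m^\prime)\delta(\tau)=E_{\mathbb{C}^m}\delta(m-m^\prime)\delta(\tau),
\end{eqnarray*}
so each $\mathbb{C}^m$ is an $(N,l^{(m)}N)$-CS and distinct sets are mutually orthogonal; thus $\mathcal{C}$ is an $(N,N,\mathbb{L})$-CCC.

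I do not expect a genuine obstacle: the computation is short once the modulation formula is in hand. The only points needing care are bookkeeping. One is justifying that $c_n^m(k)=u_{[k]_N}^n s^m(k)$ extends to all $k\in\mathbb{Z}$, so that the correlation sum may legitimately be written as a sum over $\mathbb{Z}$. The other is the case $l^{(m)}\neq l^{(m^\prime)}$ when $m\neq m^\prime$ (sequences of different lengths), which causes no difficulty precisely because the modulation factor, and hence the decoupling $\delta([k]_N-[k+\tau]_N)$, depends only on residues modulo $N$ and is independent of which sequences are involved.
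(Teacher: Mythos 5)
Your proposal is correct and follows essentially the same route as the paper's Appendix V: write the correlation sum as a double sum, collapse the inner sum over $n$ to $\alpha\,\delta([\tau]_N)$ via the column orthogonality of ${\bf U}_N$, and then invoke the $N$-shift correlation properties of the underlying $N$-CO-SF. Your version is somewhat more careful about the bookkeeping (extending the modulation identity to all $k$ and handling unequal lengths $l^{(m)}\neq l^{(m^\prime)}$), but the argument is the same.
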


\begin{example}
From the $(2,1,\{4\})$-$2$-CO-SF $\cal S$ in Example 3, we generate two SSs $\mathbb{S}^{(0)} = \left\{(+),(+),(+),(-)\right\}$ and $\mathbb{S}^{(1)} = \left\{(+),(+),(-),(+)\right\}$. If we let ${\bf U}_2={\bf H}_2$, then we have an optimal $(2,2,\{4\})$-CCC
\begin{eqnarray}
\mathcal{C}=\left[\begin{array}{cc}
{\bf c}_0^0& {\bf c}_1^0 \cr
{\bf c}_0^1& {\bf c}_1^1 
\end{array}\right]=\left[
\begin{array}{cc}
(+++-) & (+-++) \cr
(++-+) & (+---)
\end{array}\right]
\label{CCCexam1}
\end{eqnarray}
\end{example}

For a pair of  length-$N$ vector ${\bf u}$ and ${\bf v}$, we introduce entry-wise multiplication ${\bf u}\cdot{\bf v} := \left(u_nv_n\right)_{n=0}^{N-1}$. Given a unitary matrix ${\bf U}_N$, if we consider the associated $(N,1,\{N\})$-$N$-CO-SF ${\cal S}=\left\{\{{\bf u}_N^n\}\right\}_{n=0}^{N-1}$ in Theorem 5, then ${\bf u}_N^n\odot \mathbb{S}^{(m)} = {\bf u}_N^n\cdot {\bf u}_N^m$ and we have the following Corollary 1.
\begin{corollary}
 Given a unitary-like matrix ${\bf U}_N$, the SF
\begin{eqnarray}
\mathcal{C} = \left\{\left\{{\bf c}_n^m\right\}_{n=0}^{N-1}\right\}_{m=0}^{N-1} =  \left\{\left\{{\bf u}_N^{m}\cdot{\bf u}_N^{n}\right\}_{n=0}^{N-1}\right\}_{m=0}^{N-1}
\label{CCCgenS}
\end{eqnarray}
is an $(N,N,\{N\})$-CCC.
\end{corollary}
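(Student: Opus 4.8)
The plan is to obtain Corollary 1 as an immediate specialization of Theorem 5, taking as the input $N$-CO-SF the one whose single sequences are the rows of ${\bf U}_N$ themselves. Since Theorem 5 already produces a CCC from \emph{any} $(N,1,\mathbb{L})$-$N$-CO-SF via (\ref{CCCgen}), the only work is to (i) check that the rows of a unitary-like matrix do constitute a legitimate $N$-CO-SF, so that Theorem 5 applies, and (ii) verify that the connection operation in (\ref{CCCgen}) collapses to the entry-wise product appearing in (\ref{CCCgenS}) for this particular choice of input.

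For step (i) I would invoke the remark made earlier that the rows of an $N\times N$ unitary-like matrix ${\bf U}_N$ constitute an $(N,1,\{N\})$-$N$-CO-SF. Concretely, taking $\mathcal{S}=\{\{{\bf u}_N^m\}\}_{m=0}^{N-1}$, each SS $\{{\bf u}_N^m\}$ contains a single length-$N$ sequence, and the identity ${\bf U}_N{\bf U}_N^H=\alpha{\bf I}_N$ furnishes exactly the vanishing of the zero-$N$-shift auto- and cross-correlation sums demanded of an $N$-CO-SF. Here every sequence has length $N$, so $l^{(m)}=1$ for all $m$ and $\mathbb{L}=\{N\}$.

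For step (ii) I would unwind the definition (\ref{OD}) of $\odot$. With $\mathcal{S}$ as above, the associated SS $\mathbb{S}^{(m)}$ of Theorem 5 consists of the $N$ length-$1$ sequences $(u_l^m)$, $0\le l<N$, so $|\mathbb{S}^{(m)}|=N$. Applying (\ref{OD}) with ${\bf v}={\bf u}_N^n$ of length $N$ and $\mathbb{A}=\mathbb{S}^{(m)}$ of size $N$ yields $K=\lcm(N,N)=N$, and since $[k]_N=k$ for $0\le k<N$ the $k$th entry of ${\bf u}_N^n\odot\mathbb{S}^{(m)}$ is $u_k^n u_k^m$. Hence ${\bf c}_n^m={\bf u}_N^n\odot\mathbb{S}^{(m)}=(u_k^n u_k^m)_{k=0}^{N-1}={\bf u}_N^m\cdot{\bf u}_N^n$, which is precisely (\ref{CCCgenS}); commutativity of the entry-wise product accounts for the order of the factors.

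Finally I would read off the parameters: by Theorem 5 the resulting $\mathcal{C}$ is an $(N,N,\mathbb{L})$-CCC, and since $\mathbb{L}=\{N\}$ this is an $(N,N,\{N\})$-CCC, completing the argument. I expect no genuinely hard step here, as the proof is pure specialization; the only point to watch is the bookkeeping in (ii)—correctly identifying $K=N$ and the index reduction $[k]_N=k$ in (\ref{OD}), and confirming that the output length $KL=N\cdot 1=N$ is consistent with the claimed length set $\{N\}$.
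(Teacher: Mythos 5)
Your proposal is correct and follows exactly the paper's route: the paper likewise obtains Corollary 1 by feeding the $(N,1,\{N\})$-$N$-CO-SF $\left\{\{{\bf u}_N^n\}\right\}_{n=0}^{N-1}$ of rows of ${\bf U}_N$ into Theorem 5 and observing that ${\bf u}_N^n\odot\mathbb{S}^{(m)}={\bf u}_N^n\cdot{\bf u}_N^m$. Your explicit bookkeeping of $K=\lcm(N,N)=N$, $[k]_N=k$, and the output length $KL=N$ just spells out the reduction the paper states without detail.
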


Since the CCC given by Theorem 5 is constructed from an $N$-CO-SF, it has the restriction that the sequence lengths are lower bounded by $N$. To construct an $(N,N,\mathbb{L})$-CCC with $\max(\mathbb{L})  < N$, we propose enlargement of a given CCC.

\subsection{Enlargement of a CCC}
For a length-$M$ vector ${\bf v}$ and $(N,L)$-SS $\mathbb{C}$, we introduce an operator\footnote{This operation is defined by reference to the Kronecker product.} ${\bf v}\otimes\mathbb{C} := \left\{v_{[k]_M}{\bf c}_{\lfloor k/M\rfloor}\right\}_{k=0}^{MN-1}$ to construct an $(MN,L)$-SS. Then, the next theorem gives a method to enlarge a given CCC.
\begin{theorem}
Given an $(N,N,\mathbb{L})$-CCC ${\cal C}=\{\mathbb{C}^n\}_{n=0}^{N-1}$ and $N$ unitary-like matrices ${\bf U}_M^{(n)}=\left[{\bf u}_M^{(n),m}\right]_{m=0}^{M-1}$, $0 \leq n < N$, let $\mathbb{E}^{nM+m}$ be SSs given by
\begin{eqnarray}
\mathbb{E}^{nM+m}= {\bf u}_M^{(n),m}\otimes \mathbb{C}^n
\label{CCCexp}
\end{eqnarray}
for $0 \leq m < M$ and $0 \leq n < N$. Then, $\mathcal{E} = \left\{\mathbb{E}^{k}\right\}_{k=0}^{MN-1}$ is an $(MN,MN,\mathbb{L})$-CCC.
\end{theorem}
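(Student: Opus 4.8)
The plan is to verify the defining CCC identity (\ref{CCCfun}) for $\cal E$ directly, by expanding the correlation sum of two member sets and factoring it into a ``matrix part'' and a ``code part.'' Write the family indices as $k=nM+m$ and $k'=n'M+m'$ with $0\le m,m'<M$ and $0\le n,n'<N$. By the enlargement rule (\ref{CCCexp}) together with the definition of $\otimes$, the $j$th sequence of $\mathbb{E}^{k}$, call it ${\bf e}^k_j$, is the scalar multiple $\left({\bf u}_M^{(n),m}\right)_{[j]_M}{\bf c}^n_{\lfloor j/M\rfloor}$ of a sequence from $\mathbb{C}^n$. Since the correlation is bilinear in the scalar factors, $R_{a{\bf c},\,b{\bf c}'}(\tau)=ab^\ast R_{{\bf c},{\bf c}'}(\tau)$ (valid even when ${\bf c},{\bf c}'$ differ in length, by the zero-extension convention in (\ref{corrfun})), so every term of ${\cal R}_{\mathbb{E}^k,\mathbb{E}^{k'}}(\tau)=\sum_{j=0}^{MN-1}R_{{\bf e}^k_j,{\bf e}^{k'}_j}(\tau)$ carries the scalar $\left({\bf u}_M^{(n),m}\right)_{[j]_M}\left[\left({\bf u}_M^{(n'),m'}\right)_{[j]_M}\right]^\ast$ times $R_{{\bf c}^n_{\lfloor j/M\rfloor},{\bf c}^{n'}_{\lfloor j/M\rfloor}}(\tau)$.

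Next I would re-index the sum over $j$ by $j=iM+r$ with $i=\lfloor j/M\rfloor\in\{0,\dots,N-1\}$ and $r=[j]_M\in\{0,\dots,M-1\}$. The scalar then depends only on $r$ while the correlation depends only on $i$, so the double sum separates:
\begin{eqnarray*}
{\cal R}_{\mathbb{E}^k,\mathbb{E}^{k'}}(\tau) = \left(\sum_{r=0}^{M-1}\left({\bf u}_M^{(n),m}\right)_{r}\left[\left({\bf u}_M^{(n'),m'}\right)_{r}\right]^\ast\right)\left(\sum_{i=0}^{N-1}R_{{\bf c}^n_i,{\bf c}^{n'}_i}(\tau)\right).
\end{eqnarray*}
The second factor is exactly the correlation sum ${\cal R}_{\mathbb{C}^n,\mathbb{C}^{n'}}(\tau)$ of the given CCC $\cal C$, by the definition (\ref{Csum}).

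The conclusion then follows by a short case analysis that invokes (\ref{CCCfun}) for $\cal C$. If $n\neq n'$, then ${\cal R}_{\mathbb{C}^n,\mathbb{C}^{n'}}(\tau)=E_{\mathbb{C}^n}\delta(n-n')\delta(\tau)=0$ already kills the whole product, regardless of the matrix factor. If $n=n'$, the code factor reduces to $E_{\mathbb{C}^n}\delta(\tau)$, while the matrix factor becomes the inner product of the $m$th and $m'$th rows of the single matrix ${\bf U}_M^{(n)}$, which equals $\alpha_n\delta(m-m')$ by the unitary-like property ${\bf U}_M^{(n)}({\bf U}_M^{(n)})^H=\alpha_n{\bf I}_M$ for some $\alpha_n>0$. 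Combining the cases and using that $k=k'$ holds precisely when $n=n'$ and $m=m'$, I obtain ${\cal R}_{\mathbb{E}^k,\mathbb{E}^{k'}}(\tau)=\alpha_n E_{\mathbb{C}^n}\delta(k-k')\delta(\tau)$; the $\tau=0$, $k=k'$ instance identifies $E_{\mathbb{E}^k}=\alpha_n E_{\mathbb{C}^n}$, so this is exactly (\ref{CCCfun}) for $\cal E$.

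Finally I would check the bookkeeping: $\cal E$ has family size $MN$ and each $\mathbb{E}^k$ has $MN$ sequences, and since $\otimes$ only rescales sequences it preserves their lengths, so $\mathbb{E}^{nM+m}$ inherits the length $L^{(n)}$ and the length set of $\cal E$ is again $\mathbb{L}$; the diagonal case of the identity gives the CS property of each $\mathbb{E}^k$, so $\cal E$ is an $(MN,MN,\mathbb{L})$-CCC. The argument is essentially a single clean computation rather than a deep one; the only place that needs care is the re-indexing and factorization step, and the one genuine idea is that the cross-correlation sum of $\cal C$ discharges every $n\neq n'$ term \emph{before} the distinct matrices ${\bf U}_M^{(n)}$ and ${\bf U}_M^{(n')}$ ever have to be compared, which is why the $N$ matrices are allowed to be chosen independently.
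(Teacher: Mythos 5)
Your proof is correct and follows essentially the same route as the paper's: the paper's one-line proof is precisely the factorization ${\cal R}_{\mathbb{E}^{nM+m},\mathbb{E}^{n'M+m'}}(\tau)={\bf u}_M^{(n),m}\bigl[{\bf u}_M^{(n'),m'}\bigr]^H{\cal R}_{\mathbb{C}^n,\mathbb{C}^{n'}}(\tau)$ that you derive by re-indexing $j=iM+r$. Your version merely makes explicit the case analysis the paper leaves implicit, including the genuinely necessary observation that the vanishing of ${\cal R}_{\mathbb{C}^n,\mathbb{C}^{n'}}(\tau)$ for $n\neq n'$ is what permits the $N$ unitary-like matrices to be chosen independently.
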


\begin{proof}
The proof comes from the equalities
\begin{eqnarray*}
{\cal R}_{\mathbb{E}^{nM+m},\mathbb{E}^{n^\prime M+m^\prime}}(\tau) = {\bf u}_M^{(n),m}\left[{\bf u}_M^{(n^\prime),m^\prime}\right]^H {\cal R}_{\mathbb{C}^{n},\mathbb{C}^{n^\prime}}(\tau) = E_{{\bf u}_M^{(n),m}}E_{\mathbb{C}^n}\delta(n-n^\prime)\delta(m-m^\prime)\delta(\tau)
\end{eqnarray*}
for $0 \leq n,n^\prime < N$ and $0 \leq m,m^\prime < M$.
\end{proof}

\begin{example}
As an example, let ${\bf U}_2^{(0)} = {\bf I}_2$ and ${\bf U}_2^{(1)} = {\bf H}_2$ and enlarge the CCC given in Example 6 according to Theorem 6. Then, we have following optimal $(4,4,\{4\})$-CCC given in a matrix form
\begin{eqnarray}
\mathcal{E}=\left[\begin{array}{cccc}
{\bf e}_0^0& {\bf e}_1^0 & {\bf e}_2^0 & {\bf e}_3^{0}\cr
{\bf e}_0^1& {\bf e}_1^1 & {\bf e}_2^1 & {\bf e}_3^{1}\cr
{\bf e}_0^2& {\bf e}_1^2 & {\bf e}_2^2 & {\bf e}_3^{2}\cr
{\bf e}_0^3& {\bf e}_1^3 & {\bf e}_2^3 & {\bf e}_3^{3}
\end{array}\right]=\left[\begin{array}{cccc}
(++-+) & (+---) & {\bf 0}_4 &{\bf 0}_4 \cr
{\bf 0}_4 & {\bf 0}_4 &(++-+) & (+---)\cr
(+++-)& (+-++) & (+++-) & (+-++) \cr
(+++-) & (+-++) & (---+) & (-+--)
\end{array}\right]
\label{Eexp}
\end{eqnarray} 

A CCC with $MN > \max( \mathbb{L})$ is also possible by choosing large unitary-like matrices. For example, if we let ${\bf U}_4 = {\bf H}_4$, then Theorem 6 gives an $(8,8,\{4\})$-CCC.
\begin{eqnarray*}
\mathcal{E}=\left[\begin{array}{cccccccc}
+{\bf c}_0^0 & +{\bf c}_1^0 & +{\bf c}_0^0 & +{\bf c}_1^0 & +{\bf c}_0^0 & +{\bf c}_1^0 & +{\bf c}_0^0 & +{\bf c}_1^0 \cr
+{\bf c}_0^1 & +{\bf c}_1^1 & +{\bf c}_0^1 & +{\bf c}_1^1 & +{\bf c}_0^1 & +{\bf c}_1^1 & +{\bf c}_0^1 & +{\bf c}_1^1 \cr
+{\bf c}_0^0 & +{\bf c}_1^0 & -{\bf c}_0^0 & -{\bf c}_1^0 & +{\bf c}_0^0 & +{\bf c}_1^0 & -{\bf c}_0^0 & -{\bf c}_1^0 \cr
+{\bf c}_0^1 & +{\bf c}_1^1 & -{\bf c}_0^1 & -{\bf c}_1^1 & +{\bf c}_0^1 & +{\bf c}_1^1 & -{\bf c}_0^1 & -{\bf c}_1^1 \cr
+{\bf c}_0^0 & +{\bf c}_1^0 & +{\bf c}_0^0 & +{\bf c}_1^0 & -{\bf c}_0^0 & -{\bf c}_1^0 & -{\bf c}_0^0 & -{\bf c}_1^0 \cr
+{\bf c}_0^1 & +{\bf c}_1^1 & +{\bf c}_0^1 & +{\bf c}_1^1 & -{\bf c}_0^1 & -{\bf c}_1^1 & -{\bf c}_0^1 & -{\bf c}_1^1 \cr
+{\bf c}_0^0 & +{\bf c}_1^0 & -{\bf c}_0^0 & -{\bf c}_1^0 & -{\bf c}_0^0 & -{\bf c}_1^0 & +{\bf c}_0^0 & +{\bf c}_1^0 \cr
+{\bf c}_0^1 & +{\bf c}_1^1 & -{\bf c}_0^1 & -{\bf c}_1^1 & -{\bf c}_0^1 & -{\bf c}_1^1 & +{\bf c}_0^1 & +{\bf c}_1^1 \cr
\end{array}\right]
\label{Eexp}
\end{eqnarray*}

\end{example}
\section{Family size, alphabet size, and sequence length}

The CCC has attracted the attentions of numerous researchers because of its ideal correlation properties. Besides correlation properties, however, family size, alphabet size, and sequence length also play important roles in practical applications. In this section, we discuss these factors of the CCCs constructed by our algorithms.

\subsection{Family size}
In Theorem 5, we gave a method to generate a CCC from an $N$-CO-SF and $N\times N$ unitary-like matrix ${\bf U}_N$. As a result, the generated CCC has, in matrix form, rows consisting of CSs and columns consisting of $N$-CO-SFs, and hence the family size of the CCC and the size of each of the component CSs depend on the family size of the employed $N$-CO-SFs. On the other hand, Theorem 4 shows that the family size of the CCC is upper bounded by the size of the component CSs. The two facts imply that the only way to increase the family size of the CCC is to increase the set size of the component CSs, and Theorem 6 gives a method to increase the set size of the component CSs. We note that the enlargement in Theorem 6 does not change the length set.

\subsection{Alphabet size}
In Theorems 2, 3, 5, and 6, we only consider unitary-like matrices and the operations `connection' and `multiplication'. Thus, if we can show the entries of the resultant sequences are bounded in a finite set, finite-alphabet construction becomes possible. Unitary-like matrices which allow this property are ${\bf F}_N$, which exists for any positive integer $N$, and ${\bf H}_N$, which exists for $N=2^m$ with any positive integer $m$. Employing ${\bf F}_N$ in Corollary 1, we can derive a polyphase CCC ${\cal C} = \left\{\{{\bf f}_N^{nm}\}_{n=0}^{N-1}\right\}_{m=0}^{M-1}$ with alphabet size $N$ while a binary CCC ${\cal C} = \left\{\{{\bf h}_N^{n\oplus m}\}_{n=0}^{N-1}\right\}_{m=0}^{M-1}$ can be derived by employing ${\bf H}_N$  in Corollary 1, where $n\oplus m$ denotes the dyadic summation\footnote{The dyadic summation of $n$ and $m$ is equal to $t$ if and only if their binary representations $n = (n_0,n_1,\cdots,n_{I-1})$, $m = (m_0,m_1,\cdots,m_{I-1})$, and $t = (t_0,t_1,\cdots,t_{I-1})$ satisfy $[n_i+m_i]_2=t_i$, for all $0 \leq i < I$.} of $n$ and $m$. For example, employing ${\bf H}_4$ in Corollary 1, we have a $(4,4,\{4\})$-CCC as
\begin{eqnarray}
{\cal C}=\left[\begin{array}{cccc}
{\bf c}_0^0& {\bf c}_1^0 & {\bf c}_2^0 & {\bf c}_3^{0}\cr
{\bf c}_0^1& {\bf c}_1^1 & {\bf c}_2^1 & {\bf c}_3^{1}\cr
{\bf c}_0^2& {\bf c}_1^2 & {\bf c}_2^2 & {\bf c}_3^{2}\cr
{\bf c}_0^3& {\bf c}_1^3 & {\bf c}_2^3 & {\bf c}_3^{3}
\end{array}\right]=\left[\begin{array}{cccc}
{\bf h}_4^0& {\bf h}_4^1 & {\bf h}_4^2 & {\bf h}_4^{3}\cr
{\bf h}_4^1& {\bf h}_4^0 & {\bf h}_4^3& {\bf h}_4^{2}\cr
{\bf h}_4^2& {\bf h}_4^3 & {\bf h}_4^0 & {\bf h}_4^{1}\cr
{\bf h}_4^3& {\bf h}_4^2& {\bf h}_4^1& {\bf h}_4^{0}
\end{array}\right]
\end{eqnarray}

For an $(N,N,\{L\})$-CCC, in some applications as CCC-CDMA, it is expected for a higher spectral efficiency that the sum of correlations with adjacent sequence also vanishes as
\begin{eqnarray}
\sum_{n=0}^{N-1} R_{{\bf c}_{[n+1]_N}^m,{\bf c}_{n}^{m^\prime}}(L-\tau) = 0~{\rm if}~0 < \tau \leq Z
\label{ZCCC}
\end{eqnarray}
for certain positive integer $Z$. Actually, if the SS generated by connecting sequences in each CS of a CCC becomes ZCZ-SS,  then (\ref{ZCCC}) holds. Hence,  in \cite{Han09-06C}, we called such a CCC a {\it Z-connectable CCC} (Z-CCC) and proved that the $(N,N,\{N\})$-CCCs derived by employing ${\bf F}_N$ and ${\bf H}_N$ in Corollary 1 are Z-CCC with $Z=N-1$ and $Z = N/2$, respectively.
 
\subsection{Sequence length}
Although each construction shown in this paper has a certain restriction on the length of resultant sequences, we can construct quite a large class of CCCs of variety of lengths by combining the proposed algorithms.

We first consider constructible sequence lengths for $N$-CO-SFs. In our framework, one can generate an initial $N$-CO-SF by Theorem 2 and extends the result using Theorem 3 iteratively. On the other hand, we may identify the collection of rows of a unitary-like matrix ${\bf U}_N$ with an $(N,1,\{N\})$-$N$-CO-SF. Thus, we can show the following theorem by mathematical induction.
\begin{theorem}
Each sequence in the $N$-CO-SFs constructed by Algorithm 1 and by iterative application of Algorithm 2 has a length equal to a product of integers which are not greater than $N$. Conversely, given an integer $L$ which is decomposed into $N$ and factors not greater than $N$,  an $N$-CO-SF whose length set includes $L$ can be constructed. 
\end{theorem}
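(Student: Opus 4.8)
The plan is to prove the two assertions separately, both by induction on the \emph{structure} of the construction---equivalently, on the number of algorithm invocations used to build the family, counting recursively those invocations that build the auxiliary CO-SFs. The bookkeeping rests on two elementary length identities read off directly from the connection operator $\odot$: in Algorithm 1 a block $\mathbb{A}^{({\bf p}_1)}$ contributes sequences of length $|\mathbb{A}^{({\bf p}_1)}|\,N$, and in one application of Algorithm 2 a sub-block $\mathbb{A}^{({\bf p}_2)}$ together with the $m$th sequence of the auxiliary $|\mathbb{A}^{({\bf p}_2)}|$-CO-SF $\mathcal{V}^{({\bf p}_2)}$ contributes a sequence whose length is the product of the length $l^{({\bf p}_2,m)}|\mathbb{A}^{({\bf p}_2)}|$ of that auxiliary sequence and the common length $l^{({\bf p}_1)}N$ of the sequences in $\mathbb{A}^{({\bf p}_2)}$.

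For the forward (necessity) direction I would argue as follows. The base objects are the rows of a unitary-like matrix ${\bf U}_N$, of length $N$, and the outputs of Algorithm 1, of length $|\mathbb{A}^{({\bf p}_1)}|\,N$; since $\mathbb{A}^{({\bf p}_1)}$ is a subset of the $N$-element set $\mathbb{A}$, we have $|\mathbb{A}^{({\bf p}_1)}|\le N$, so both lengths are products of integers $\le N$. For the inductive step, consider a family obtained by one application of Algorithm 2 to a previously constructed $N$-CO-SF ${\cal A}$ using auxiliary CO-SFs $\mathcal{V}^{({\bf p}_2)}$. Here $|\mathbb{A}^{({\bf p}_2)}|\le N$ because $\mathbb{A}^{({\bf p}_2)}$ is a subset of the $N$-element sequence set of ${\cal A}$; by the induction hypothesis applied to ${\cal A}$ the factor $l^{({\bf p}_1)}N$ is a product of integers $\le N$, and by the induction hypothesis applied to the auxiliary $|\mathbb{A}^{({\bf p}_2)}|$-CO-SF the length $l^{({\bf p}_2,m)}|\mathbb{A}^{({\bf p}_2)}|$ of its $m$th sequence is a product of integers $\le |\mathbb{A}^{({\bf p}_2)}|\le N$. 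Multiplying the two factorizations shows every output length is again a product of integers $\le N$, closing the induction.

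For the converse (sufficiency) direction, write $L = N f_1 f_2 \cdots f_k$ with each $f_i \le N$ and, crucially, relabel so that $f_1 \ge f_2 \ge \cdots \ge f_k$. I would then build $L$ into the length set by a chain of $k$ elongations starting from the $(N,1,\{N\})$-$N$-CO-SF formed by the rows of ${\bf F}_N$ (chosen unimodular). I maintain the invariant that after step $i$ there are at least $f_i$ sequences of the maximal length $N f_1 \cdots f_i$ (with the convention $f_0=N$, so the initial family supplies $N=f_0\ge f_1$ sequences of length $N$). At step $i+1$ one application of Algorithm 2 takes a sub-block of exactly $f_{i+1}$ of those sequences, uses the auxiliary $(f_{i+1},1,\{f_{i+1}\})$-$f_{i+1}$-CO-SF consisting of the rows of ${\bf F}_{f_{i+1}}$, and leaves every remaining sequence fixed by placing it in a singleton sub-block with the trivial $1\times 1$ auxiliary $[\,1\,]$. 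By the second length identity this yields $f_{i+1}$ sequences of length $f_{i+1}\cdot(N f_1 \cdots f_i) = N f_1 \cdots f_{i+1}$, and the decreasing order guarantees $f_{i+1}\le f_i$, so the required sub-block always exists. After step $k$ the family contains a sequence of length $L$, as claimed.

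The main obstacle is exactly this counting step in the converse: a sub-block of size $f_{i+1}$ can be carved out only if enough equal-length sequences survive, and that is what forces the decreasing ordering of the factors, so that the $f_i$ survivors of one step outnumber the demand $f_{i+1}$ of the next. A secondary point to discharge is the equal-energy hypothesis of Theorem 3 at each elongation; this is automatic once all matrices ${\bf F}_N,{\bf F}_{f_i}$ are unimodular, since then every constructed sequence has energy equal to its length and all sequences inside any equal-length sub-block share the same energy. Finally, one should note for the forward direction that induction on $N$ alone does not close---an auxiliary CO-SF may again have index $N$---so the induction must be run on the size of the construction tree, with respect to which both ${\cal A}$ and every auxiliary $\mathcal{V}^{({\bf p}_2)}$ are strictly smaller.
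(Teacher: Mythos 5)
Your proof is correct and follows essentially the same route as the paper's: a structural induction on the construction (via the two length identities for $\odot$) for the forward direction, and for the converse the same decreasing ordering of the factors $f_1\ge f_2\ge\cdots\ge f_k$ so that each elongation step leaves at least $f_i\ge f_{i+1}$ equal-length sequences from which the next sub-block can be carved. You are somewhat more explicit than the paper about the induction variable (the size of the construction tree rather than $N$ alone) and about discharging the equal-energy hypothesis of Theorem 3, but the substance is identical.
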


\begin{proof}
For an $N$, Algorithm 1 gives $(N,1,\mathbb{L})$-$N$-CO-SFs with length set $|\mathbb{A}^{({\bf p}_1)}|N \in \mathbb{L}$ for subsets $\mathbb{A}^{({\bf p}_1)}$ of $\mathbb{A}$. Therefore, $|\mathbb{A}^{({\bf p}_1)}| \leq |\mathbb{A}| =N$ and the first half of Theorem 7 is true for the $N$-CO-SFs constructed by Algorithm 1.

Next, let us consider the lengths of $N$-CO-SFs constructed by iterative application of Algorithm 2. We assume that we are given an $(N,1,\mathbb{L})$-$N$-CO-SF with $l^{({\bf p}_1)}N \in \mathbb{L}$, at the beginning of Algorithm 2, and an $(|\mathbb{A}^{({\bf p}_2)}|,1,\mathbb{L}^{({\bf p}_2)})$-$|\mathbb{A}^{({\bf p}_2)}|$-CO-SF ${\cal V}^{({\bf p}_2)}$ with $l^{({\bf p})}|\mathbb{A}^{({\bf p}_2)}| \in \mathbb{L}^{({\bf p}_2)}$, at Specification step of Algorithm 2, where $l^{({\bf p}_1)}$ and $l^{({\bf p})}$ are assumed to be decomposed into factors which are not greater than $N$ and factors which are not greater than $|\mathbb{A}^{({\bf p}_2)}|$, respectively. Then, the connection operation ${\bf v}^{({\bf p}_2),m}\odot \mathbb{A}^{({\bf p}_2)}$ gives an SS with $l^{({\bf p})}\left|\mathbb{A}^{({\bf p}_2)}\right|l^{({\bf p}_1)}N \in \mathbb{L}^\prime$. Obviously, $\left|\mathbb{A}^{({\bf p}_2)}\right| \leq N$. Thus, given constituent $N$-CO-SFs satisfying the first half of the theorem, Algorithm 2 gives an $N$-CO-SF which also satisfies it. 

Conversely, if a length can be decomposed by factors which are not greater than $N$, the following algorithm yields an $N$-CO-SF whose length set includes such a length. Assume the target length is $L = N\prod_{j=0}^{J-1}l^{(j)}$ for $N \geq l^{(0)} \geq l^{(1)},\cdots,\geq  l^{(J-1)} > 1$, then an $N$-CO-SF whose length set includes $l^{(0)}N$ can be constructed by Algorithm 1 from unitary-like matrices ${\bf U}_N$ and ${\bf U}_{l^{(0)}}$ and the resultant $N$-CO-SF includes at least $l^{(0)}$ sequences with length $Nl^{(0)}$. Hence, at Partition step of Algorithm 2, we have $|\mathbb{A}^{({\bf p}_1)}| \geq l^{(0)}$ and may select $\mathbb{A}^{({\bf p}_2)}$ such that $|\mathbb{A}^{({\bf p}_2)}| = l^{(1)}$. By specifying $(l^{(1)},1,\{l^{(1)}\})$-$l^{(1)}$-CO-SF consisting of rows of $l^{(1)} \times l^{(1)}$ unitary-like matrix at Specification step, Algorithm 2 yields at least $l^{(1)}$  length-$Nl^{(0)}l^{(1)}$ sequences and, from $l^{(1)} \geq l^{(2)},\cdots,\geq  l^{(J-1)} > 1$, we may select $\mathbb{A}^{({\bf p}_2)}$ such that $|\mathbb{A}^{({\bf p}_2)}| = l^{(2)}$ at Partition step in the next iteration. By repeat the above process, we can construct an $N$-CO-SF which includes length $N\prod_{j=0}^{J-1}l^{(j)}$ sequence(s). This completes the proof.
\end{proof}

Since $1$ and $2$ are only two integers not greater than $2$, for instance, a $2$-CO-SF whose length set includes $2^n$ can be constructed for any positive integer $n$ while a length-$6$ $2$-CO-SF can not be constructed with our algorithms since $6$ includes a factor $3$ which is greater than $2$. However, length-$6$ $2$-CO-SF and other $N$-CO-SFs which can not be constructed by our algorithm  are not found at this moment.

Theorem 5 gives a mapping from a pair of an $(N,1,\mathbb{L})$-$N$-CO-SF and a unitary-like matrix ${\bf U}_N$ to an $(N,N,\mathbb{L})$-CCC and lays a bridge between $N$-CO-SFs and CCCs of the same length set $\mathbb{L}$. If we need an $(N,N,\mathbb{L})$-CCC with large sequence lengths for a fixed $N$, we can extend an $N$-CO-SF by Algorithm 2 and apply the operation in Theorem 5. When we need to construct an $(N,N,\mathbb{L})$-CCC with short sequence lengths, on the contrary, we construct $(N^\prime,N^\prime,\mathbb{L})$-CCC based on $N^\prime$-CO-SF with $N^\prime < N$ and enlarge the derived CCC by the operation in Theorem 6 to achieve the family size $N$. As the results, we can construct CCCs whose lengths are independent with $N$. Combining the result on the constructible lengths of $N$-CO-SF, the CCCs constructed in our framework can be $(\prod_{i=0}^{I-1}M^{(i)}N,\prod_{i=0}^{I-1}M^{(i)}N,\mathbb{L})$-CCCs whose length sets may contain length $N\prod_{j=0}^{J-1}l^{(j)}$ for arbitrary $1\leq l^{(j)} \leq N$, where $M^{(i)}$ are any positive integers.
Naturally, a question arises whether all CCCs are composed of $N$-CO-SFs? Untill now, we have not found a CCC which is not composed of $N$-CO-SFs, but its proof is not yet fully substantiated. 

Existing construction algorithms for CCCs can be understood in our framework. For example, CCCs with length $N^2$ proposed by Suehiro and Hatori \cite{Suehiro88-01J} can be generated for $P_1 = 1$ in  Algorithm 1 and the lengthening method described in \cite{Suehiro88-01J} can be obtained for $P_2 = 1$ in Algorithm 2. Han's construction method in \cite{Han04-04J} can be obtained for $P_1=N/M$ with $|\mathbb{A}^{({\bf p}_1)}|=M$ for all ${\bf p}_1$. The method given in \cite{Yang07-11C} can be considered as application of Theorem 6 to a $(2,2,\{L\})$-CCC $\cal C$ and ${\bf U}_M^{(n)} = {\bf U}_N$ for all $n$.

\section{Conclusion}
In this paper, we proposed systematic and optimal constructions of $N$-CO-SFs and CCCs. These constructions are realized by generation and extension methods and only unitary-like matrices are used. For any positive integers $M^{(i)}$, $0 \leq i < I$, an $(\prod_{i=0}^{I-1}M^{(i)}N,\prod_{i=0}^{I-1}M^{(i)}N,\mathbb{L})$-CCC  with $\mathbb{L}$ consisting of lengths $N\prod_{j=0}^{J-1}l^{(j)}$ for $1\leq l^{(j)} \leq N$ can be constructed by the use of presented methods and this form of CCC covers all the existing CCCs. In our framework, moreover, the alphabet size may be controlled by the appropriate selection of unitary-like matrices.

\appendices
\section{Proof of Theorem 1}
We consider an arbitrary $(M,1,\mathbb{L})$-$N$-CO-SF $\mathcal{S}=\{\{{\bf s}^m\}\}_{m=0}^{M-1}$ with $L^{(m)}=l^{(m)}N$ and let $l := \max (\mathbb{L})/N$, $\hat{\bf s}^m = ({\bf s}^m, {\bf 0}_{(l-l^{(m)})N})$. Next we introduce an $lM \times lN$ matrix ${\bf G} = \left[{\bf g}^m\right]_{m=0}^{lM-1}= \left[T^{[m]_lN}(\hat{\bf s}^{\lfloor m/l \rfloor})\right]_{m=0}^{lM-1}$ where $T^{vN}(\hat{\bf s}^u)$ denotes cyclicly right $vN$ shift of $\hat{\bf s}^u$. Then, the $(m,m^\prime)$th entry of $\Phib={\bf G} {\bf G}^H$ is given by
\begin{eqnarray*}
\phi_{m^\prime}^m & = & 
= T^{[m]_lN}(\hat{\bf s}^{\lfloor m/l \rfloor}) \cdot \left( T^{[m^\prime]_lN}(\hat{\bf s}^{\lfloor m^\prime/l \rfloor}))\right)^H
=\tilde{R}_{\hat{\bf s}^{\lfloor m/l \rfloor},\hat{\bf s}^{{\lfloor m^\prime/l \rfloor}}}([m^\prime-m]_lN)
\end{eqnarray*}
and, from the definition of $N$-CO-SF and the identity (\ref{relcorr}), we have $\Phib = E_{{\bf s}^m} {\bf I}_{Ml}$. On the other hand, the fact that $rank(\Phib) \leq rank({\bf G})$ means $Ml \leq \min\{Nl,Ml\}$ and hence means the bound $M \leq N$.

\section{Proof of Theorem 2}
Let $l^{({\bf p}_1)} = |\mathbb{A}^{({\bf p}_1)}|$ and $\mathbb{A}^{({\bf p}_1)} = \left\{{\bf a}_i^{({\bf p}_1)}\right\}_{i=0}^{l^{({\bf p}_1)-1}}$for ${\bf p}_1 \in \mathbb{P}_1$. Because of Theorem 1, we only need to show that $\cal S$ is an $N$-CO-SF. Then, the correlation between ${\bf s}^{\bf p}$, ${\bf p} = ({\bf p}_1,m)$, and ${\bf s}^{{\bf p}^\prime}$, ${\bf p}^\prime = ({\bf p}_1^\prime,m^\prime)$, is given by
\begin{eqnarray*}
R_{ {\bf s}^{{\bf p}}, {\bf s}^{{\bf p}^\prime}} ( kN ) 
 &=&  \sum_{i=0}^{l-1}\sum_{j=0}^{N-1}s^{{\bf p}}(iN+j)\left\{s^{{\bf p}^\prime}\left([i+k]N+j\right)\right\}^\ast\\
& = & \sum_{i=0}^{l-1}u^{({\bf p}_1),m}(i)\left[u^{({\bf p}_1^\prime),m^\prime}\left([i+k]_{l^\prime}\right)\right]^\ast{\bf a}_i^{({\bf p}_1)}\left[{\bf a}_{[i+k]_{l^\prime}}^{({\bf p}_1^\prime)}\right]^H
\end{eqnarray*}
where we let $l^{({\bf p}_1)}=l$ and $l^{({\bf p}_1^\prime)}=l^\prime$ for simplicity.

Since $\mathbb{A}$ consists of rows of ${\bf U}_N$ and ${\bf U}_{l^{({\bf p}_1)}}^{({\bf p}_1)}$ which are unitary-like matrices for all ${\bf p}_1$, the correlation is calculated as
\begin{eqnarray*}
R_{ {\bf s}^{{\bf p}}, {\bf s}^{{\bf p}^\prime}} ( kN )& = & {\bf u}^{({\bf p}_1),m}\left[{\bf u}^{({\bf p}_1^\prime,m^\prime)}\right]^H \delta({\bf p}_1-{\bf p}_1^\prime)\delta(k) = E_{{\bf s}^{{\bf p}}}\delta({\bf p}-{\bf p}^\prime)\delta(k)
\end{eqnarray*}
This shows that $\cal S$ is an $N$-CO-SF and completes the proof.
\section{Proof of Theorem 3}
For a given ${\bf p} \in \mathbb{P}$, we let $l = l^{({\bf p})} = l^{({\bf p}_2,m)}$, $l_1 = l^{({\bf p}_1)}$, and $k = |\mathbb{A}^{({\bf p}_2)}|$, and, for another ${\bf p}^\prime \in \mathbb{P}$, we introduce $l^\prime$, $l^\prime_1$, and $k^\prime$ in a similar manner. Moreover, we let $\mathbb{A}^{({\bf p}_2)}=\left\{{\bf a}_n^{({\bf p}_2)}\right\}_{n=0}^{k-1}$.

We first note
\begin{eqnarray*}
{\bf s}^{({\bf p})} = {\bf v}^{({\bf p}_2),m}\odot \mathbb{A}^{({\bf p}_2)} = \sum_{i=0}^{lk-1} v^{({\bf p}_2),m}(i){\bf b}^{({\bf p}_2,i)}
\end{eqnarray*}
for ${\bf b}^{({\bf p}_2,i)} = \left({\bf 0}_{il_1N}~{\bf a}_{[i]_k}^{({\bf p}_2)}~{\bf 0}_{(lk-i-1)l_1N}\right)$ and, due to ${\bf a}_{[i]_k}^{({\bf p}_2)}$ is chosen from an $N$-CO-SF, we have
\begin{eqnarray}
R_{{\bf b}^{({\bf p}_2,i)},{\bf b}^{({\bf p}_2^\prime,i^\prime)}}(xN) &=& R_{{\bf b}^{({\bf p}_2,i)},{\bf b}^{({\bf p}_2,i^\prime)}}(xN) \delta({\bf p}_2-{\bf p}_2^\prime)\nonumber\\
&=&R_{{\bf a}^{({\bf p}_2)}_{[i]_k},{\bf a}^{({\bf p}_2)}_{[i^\prime]_{k}}}\left([x+i l_1-i^\prime l_1]N\right)\delta({\bf p}_2-{\bf p}_2^\prime)\nonumber\\
\label{corra}
&=&E_{{\bf a}^{({\bf p}_2)}_{[i]_k}} \delta\left(x+il_1 -i^\prime l_1\right)\delta([i-i^\prime]_k)\delta({\bf p}_2-{\bf p}_2^\prime)
\end{eqnarray}

On the other hand, the correlation between ${\bf s}^{({\bf p})}$ and ${\bf s}^{({\bf p}^\prime)}$ can be expressed by
\begin{eqnarray*}
R_{ {\bf s}^{({\bf p})}, {\bf s}^{({\bf p}^\prime)}} ( xN )
 &=&  \sum_{i=0}^{lk-1}\sum_{i^\prime=0}^{l^\prime k^\prime-1}v^{({\bf p}_2),m}(i)\left[v^{({\bf p}_2^\prime),m^\prime}(i^\prime)\right]^\ast R_{{\bf b}^{({\bf p}_2,i)},{\bf b}^{({\bf p}_2^\prime,i^\prime)}}(xN)
\end{eqnarray*}
and, since $E_{{\bf a}^{({\bf p}_2)}_{[i]_k}}$ is independent of $[i]_k$ for a given ${\bf p}_2$, by substituting $(\ref{corra})$, the correlation can be further calculated as
\begin{eqnarray*}
&&  E_{{\bf a}^{({\bf p}_2)}_{[i]_k}}\sum_{i=0}^{lk-1}\sum_{i^\prime=0}^{l^\prime k-1}v^{({\bf p}_2),m}(i)\left[v^{({\bf p}_2),m^\prime}(i^\prime)\right]^\ast \delta\left(x+i l_1 -i^\prime l_1\right)\delta([i-i^\prime]_k)\delta({\bf p}_2-{\bf p}_2^\prime)\\
&=&  E_{{\bf a}^{({\bf p}_2)}_{[i]_k}}\sum_{i=0}^{lk-1}v^{({\bf p}_2),m}(i)\left[v^{({\bf p}_2),m^\prime}\left(i+ x/l_1\right)\right]^\ast \delta\left(\left[x\right]_{\lcm(k,l_1)}\right) \delta({\bf p}_2-{\bf p}_2^\prime)\\
&=&  E_{{\bf a}^{({\bf p}_2)}_{[i]_k}}R_{{\bf v}^{({\bf p}_2),m},{\bf v}^{({\bf p}_2),m^\prime}}( x/l_1) \delta \left(\left[x\right]_{\lcm(k,l_1)}\right) \delta({\bf p}_2-{\bf p}_2^\prime)\\
&=&  E_{{\bf a}^{({\bf p}_2)}_{[i]_k}}E_{{\bf v}^{({\bf p}_2),m}}\delta(x) \delta({\bf p}_2-{\bf p}_2^\prime)
\end{eqnarray*}
provided by ${\bf v}^{({\bf p}_2),m}$ and ${\bf v}^{({\bf p}_2),m^\prime}$ are chosen from a $k$-CO-SF.

\section{Proof of Theorem 4}
Let $L=\max(\mathbb{L})$. Similarly to Appendix I, we extend each sequence in the CCC to length $L$ by $\hat{\bf s}_{n}^m = \left({\bf s}_{n}^m, {\bf 0}_{L-L^{(m)}}\right)$ and consider an $ML\times NL$ matrix ${\bf G} = \left[{\bf g}_0^m~{\bf g}_1^m~\cdots~{\bf g}_{N-1}^m\right]_{m=0}^{M-1}$ for ${\bf g}_n^m =T^{[m]_L}(\hat{\bf s}_n^{\lfloor m/L \rfloor})$.  Then, we have $\Phib = {\bf G}{\bf G}^H = E_{{\bf s}^m} {\bf I}_{ML} \leq rank({\bf G}) =\min\{NL,ML\} $ and hence $M \leq N$.

\section{Proof of Theorem 5}
Let $\mathbb{C}^m = \{{\bf c}_n^m\}_{n=0}^{N-1}$. Then, from (\ref{CCCgen}), ${\cal R}_{\mathbb{C}^m,\mathbb{C}^{m^\prime}}(\tau) $ is given by
\begin{eqnarray*}
&  & \sum_{k=0}^{l^{(m)}N-1} s^{(m)}(k)\left[s^{(m^\prime)}\left(k+\tau\right)\right]^\ast \sum_{n=0}^{N-1}u^n([k]_N)\left\{u^n\left([k+\tau]_N\right)\right\}^\ast\\
& = & R_{{\bf s}^{(m)},{\bf s}^{(m^\prime)}}(\tau)E_{{\bf u}_N^n}\delta([\tau]_N)\\
& = & E_{{\bf s}^{(m)}}\delta(m-m^\prime)\delta(\tau)
\end{eqnarray*}
This completes the proof.
   


\begin{thebibliography}{10}
\bibitem{Verdu99-03J}
S. Verd\'{u} and S. Shamai (Shitz), ``Spectral efficiency of CDMA with random spreading,'' {\it IEEE Trans. Inform. Theory}, vol. 45, no. 2, pp. 622-640, Mar. 1999.
\bibitem{Fan99B}
P. Fan and M. Darnell, {\it Sequence Design for Communications Applications}. New York: Wiley, 1999.
\bibitem{Suehiro98-11C}
N. Suehiro and N. Kuroyanagi, ``Multipath-tolerant binary signal design for approximately synchronized CDMA systems without co-channel interference using complete complementary codes," in {\it Proc. IEEE Global Telecommun. Conf.} (GLOBECOM 1998), Nov. 1998, pp. 1356-1361.
\bibitem{Suehiro00-09C}
N. Suehiro, N. Kuroyanagi, T. Imoto, and S. Matsufuji, ``Very efficient frequency usage system using convolutional spread time signals based on complete complementary code,'' in {\it Proc. IEEE Int. Symp. Personal, Indoor and Mobile Radio Commun.} (PIMRC'00), Sep. 2000, pp. 1567-1572.
\bibitem{Suehiro88-01J}
N.~Suehiro and M.~Hatori, ``$N$-shift cross-orthogonal sequences,'' {\it IEEE Trans. Inform. Theory}, vol. IT-34, no. 1, pp. 143-146, Jan. 1988. 
\bibitem{Tseng00-01J}
S.-M. Tseng and M. R. Bell, ``Asynchronous multicarrier DS-CDMA using mutually orthogonal complementary sets of sequences," {\it IEEE Trans. Commun.}, vol. 48, no. 1, pp. 53-59, Jan. 2000.
\bibitem{Chen06-02J}
H.-H. Chen, H.-W. Chiu, and M. Guizani, ``Orthogonal complementary codes for interference-free CDMA technologies," {\it IEEE Wireless Commun.}, vol. 13, no. 1, pp. 68-79, Feb. 2006.
\bibitem{Han01-09C}
C. Han, N. Suehiro, J. Chen, N. Kuroyanagi, and M. Nakamura, ``Simulation of a parallel transmission 
system for multipath property to estimate pilot signals and additional chip-shifted information 
transmission signals,h in {\it Proc. Int. Workshop Signal Design and Its Applications in Commun.} (IWSDA'01), Sep. 2001, pp.107-116. 
\bibitem{Kojima06-09J}
T. Kojima, A. Fujiwara, K. Yano, M. Aono, and N. Suehiro, ``Comparison of the two signal design methods iin the CDMA systems using complete complementary codes," {\it IEICE Trans. Fundamentals}, vol. E89-A, no. 9, pp. 2299-2306, Sep. 2006.
\bibitem{Chen01-10J}
H.-H. Chen, J.-F. Yeh, and N. Suehiro, ``A multicarrier CDMA architecture based on orthogonal complementary codes for new generations of wideband wireless communications," {\it IEEE Commun. Mag.}, vol. 39, no. 10, pp. 126-135, Oct. 2001.
\bibitem{Chen07-02J}
H.-H. Chen, Y.-C. Yeh, Q. Bi, and A. Jamalipour, ``On a MIMO-based open wireless architecture: space-time complementary coding," {\it IEEE Commun. Mag.}, vol. 45, no. 2, pp. 104-112, Feb. 2007.
\bibitem{Lu08-01J}
L. Lu and V. K. Dubey, ``Performance of a complete complementary code-based spread-time CDMA system in a fading channel," {\it IEEE Trans. Veh. Tech.}, vol. 57, no. 1, pp. 250-259, Jan. 2008.
\bibitem{Suehiro94-06J}
N. Suehiro, ``A signal design without co-channel interference for approximately synchronized CDMA systems," {\it IEEE J. Sel. Areas Commun.}, vol. 12, no. 5, pp. 837-841, June 1994.
\bibitem{Appuswamy06-08J}
R. Appuswamy and A. K. Chaturvedi, ``A new framework for constructing mutually orthogonal complementary sets and ZCZ sequences," {\it IEEE Trans. Inform. Theory}, vol. 52, no. 8, pp. 3817-3826, Aug. 2006.
\bibitem{Han07-09C}
C. Han, T. Hashimoto, and N. Suehiro, ``Poly phase zero-correlation zone sequences based on complete complementary codes and DFT matrix," in {\it Proc. Int. Workshop Signal Design and Its Applications in Commun.} (IWSDA'07), Sep. 2007, pp. 172-175.
\bibitem{Han08-12J}
C. Han, T. Hashimoto, and N. Suehiro, ``A new construction method of zero-correlation zone sequences based on complete complementary codes,h {\it IEICE Trans. Fundamentals}, vol. E91-A, no.12, pp. 3698-3702, Dec. 2008.
\bibitem{Nalin06-11J}
N. S. Weerasinghe and T. Hashimoto, ``Convolutional spreading CDMA and comparison with the DS-CDMA with RAKE receiver," {\it IEEE Trans. Commun.}, vol. 54, no. 11, pp. 1918-1922, Nov. 2006.
\bibitem{Nalin07-06J}
N. S. Weerasinghe, C. Han, and T. Hashimoto, ``CS-CDMA/CP with ZCZ codes from an M-sequence and its performance for downlink transmission over a multipath fading channel," {\it IEICE Trans. Fundamentals}, vol. E90-A, no. 6, pp. 1204-1213, June 2007.
\bibitem{Khamy04-03C}
S. E. El-Khamy, M. A. Mokhtar, and N. O. El-Ganainy, ``New techniques for image change and motion detection based on complete complementary code arrays," in {\it Proc. Radio Science Conf.} (NRSC 2004), Mar. 2004, pp. C35-1-8.
\bibitem{Khamy05-11C}
S. E. El-Khamy, N. O. El-Ganainy, and M. A. Mokhtar, ``The twin image watermarking algorithm (TIWA): A new techniques for single and multi-message watermarking based on optimum complete complementary codes," in {\it Proc. Int. Conf. Comp. as a Tool} (EUROCON 2005), Nov. 2005, pp. 939-942.
\bibitem{Detert04-05C}
T. Detert, W. Haak, and I. Martoyo, ``Complete complementary codes applied to UTRA FDD asynchronous uplink," in {\it Proc. IEEE Mediterranean Electrotechnical Conf.} (MELECON 2004), May 2004, pp. 441-444.
\bibitem{Liu07-03C}
J. Liu, G. Kang, S. Lu, and P. Zhang, ``Preamble design based on complete complementary sets for random access in MIMO-OFDM systems," in {\it Proc. Wireless Commun. and Networking Conf.} (WCNC 2007), Mar. 2007, pp. 858-862.
\bibitem{Ozgur05-03C}
S. Ozgur and D. B. Williams, ``Temporal partition particle filtering for multiuser detectors with mutually orthogonal sequences," in {\it Proc. IEEE Int. Conf. Acoustics, Speech, and Signal Processing} (ICASSP 2005), Mar. 2005, pp. iii/1033-iii/1036.
\bibitem{Golay61-4J}
M.~J.~E. Golay, ``Complementary series,'' {\it IRE Trans. Inform. Theory}, vol. IT-7, no. 2, pp. 82-87, Apr. 1961.
\bibitem{Turyn63-01J}
   R.~Turyn, ``Ambiguity function of complementary sequences,'' {\it IEEE Trans. Inform. Theory}, vol. IT-9, no. 1, pp. 46-47, Jan. 1963.       
\bibitem{Taki69-03J}
Y.~Taki, H. Miyakawa, M. Hatori, and S. Namba, ``Even-shift orthogonal sequences,'' {\it IEEE Trans. 
Inform. Theory}, vol. IT-15, no. 2, pp. 295-300, Mar. 1969.
\bibitem{Tseng72-09J}
C.~C.~Tseng, and C. L. Liu, ``Complementary sets of sequences," {\it IEEE Trans. Inform. Theory}, vol. IT-18, no. 5, pp. 644-652, Sep. 1972.         
\bibitem{Sivaswamy78-09J}
R.~Sivaswamy, ``Multi-phase complementray codes," {\it IEEE Trans. Inform. Theory}, vol. IT-24, no. 5, pp. 546-552, Sep. 1978.        
\bibitem{Frank80-11J}
R.~L.~Frank, ``Poly-phase complementary codes," {\it IEEE Trans. Inform. Theory}, vol. IT-26, no. 6, pp. 641-647, Nov. 1980. 
\bibitem{Davis99-11J}
J. A. Davis and J. Jedwab, ``Peak-to-mean power control in OFDM, Golay complementary sequences, and Reed-Muller codes,'' {\it IEEE Trans. Inform. Theory}, vol. 45, no. 7, pp. 2397-2417, Nov. 1999.
\bibitem{Nee00B}
R. V. Nee and R. Prasad, {\it OFDM for Wireless Multimedia Communications}. Artech House Publishers, 2000. 
\bibitem{Robing01-07J}
C. R\"{o}$\beta$ing and V. Tarokh, ``A construction of OFDM 16-QAM sequences having low peak powers,'' {\it IEEE Trans. Inform. Theory}, vol. 47, no. 5, pp. 2091-2094, July 2001.
\bibitem{Tarokh03-01J}
B. Tarokh and H. R. Sadjadpour, ``Construction of OFDM $M$-QAM sequences with low peak-to-average power ratio,'' {\it IEEE Trans. Commun.}, vol. 51, no. 1, pp. 25-28, Jan. 2003.
\bibitem{Lee06-04J}
H. Lee and S. H. Golomb, ``A new construction of 64-QAM Golay complementary sequences,'' {\it IEEE Trans. Inform. Theory}, vol. 52, no. 4, pp. 1663-1670, Apr. 2006.
\bibitem{Li05-03J}
Y. Li and W. B. Chu, ``More Golay sequences,'' {\it IEEE Trans. Inform. Theory}, vol. 51, no. 3, pp. 1141-1145, Mar. 2005.
\bibitem{Fiedler06-09J}
F. Fiedler and J. Jedwab, ``How do more Golay sequences arise?'' {\it IEEE Trans. Inform. Theory}, vol. 52, no. 9, pp. 3220-3232, July 2006.
\bibitem{Han04-04J}
C.~Han, N.~Suehiro, and T.~Imoto, ``A generation method for constructing complete complementary sequences'', {\it The World Scientific and Engineering Academy and Society} (WSEAS) {\it Trans. Commun.}, vol. 3, no. 2, pp. 530-534, Apr. 2004.
\bibitem{Han04-10C}
C.~Han and N.~Suehiro, ``A generation method for constructing $(N, N, MN/P)$ complete-complementary sequences,'' in {\it Proceedings of the Joint IST Workshop on Mobile Future and Symposium on Trends in Communications} (Sympotic'04), Oct. 24-26, 2004, pp. 70-73.
\bibitem{Raja07-07J}
R.~S.~Raja~Durai, N.~Suehiro, and C.~Han, ``Complete complementary sequences of different length,'' {\it IEICE Trans. Fundamentals}, vol. E90-A, no. 7, pp. 1428-1431, July 2007.
\bibitem{Huang02-11C}
X.~Huang and Y. Li, ``Scalable complete complementary sets of sequences'', in {\it Proc. IEEE Globe Telecommun. Conf.} (GLOBECOM 2002), Nov. 2002, pp.1056-1060.
\bibitem{Lu04-12J}
L. Lu and V. K. Dubey, ``Extended orthogonal polyphase codes for multicarrier CDMA system," {\it IEEE Commun. Lett.}, vol. 8, no. 12, pp. 700-702, Dec. 2004.
\bibitem{Marziani07-05J}
C. De Marziani, J. Urena, A. Hernandez, M. Mazo, F. J. Alvarez, J. J. Garcia, and P. Donato, ``Modular architecture for efficient generation and correlation of complementary set of sequences," {\it IEEE Trans. Signal Processing}, vol. 55, no. 5, pp. 2323-2337, May 2007.
\bibitem{Yang07-11C}
X. Yang, Y. Mo, D. Li, and M. Bian, ``New complete complementary codes and their analysis,'' in {\it Proc. IEEE Globe Telecommun. Conf.} (GLOBECOM 2007), Nov. 2007, pp. 3899-3904.
\bibitem{Wu08-01J}
W. Di and P. Spasojevi\'{c}, ``Complementary set matrices satisfying a column correlation constraint," {\it IEEE Trans. Inform. Theory}, vol. 54, no. 7, pp. 3330-3339, 2008.
\bibitem{Chen08-11J}
C. Y. Chen, C. H. Wang, and C. C. Chao, ``Complete complementary codes and generalized reed-muller codes," {\it IEEE Commun. Lett.}, vol. 12, no. 11, pp. 849-851, Nov. 2008.
\bibitem{Torii00-07C}
H.~Torii, N.~Suehiro, and M.~Nakamura, ``General construction of periodic complete complementary codes composed of expanded modulatable orthogonal sequences,'' in {\it Proc. IEEE Symp. Computers and Commun.} (ISCC'00), July 2000, pp. 738-743.
\bibitem{Torii01-03C}
H. Torii, M. Nakamura, and N. Suehiro, ``Generalized method for constructing modulatable periodic complete complementary codes," in {\it Proc. IEEE Workshop on Signal Processing Advances in Wireless Commun.} (SPAWC 2001), Mar. 2001, pp. 259-262.
\bibitem{Farkas03-10C}
P.~Farkas and M.~Turcsany, ``Two-dimensional orthogonal complete complementary codes,'' in 
{\it Proc. the Joint IST Workshop on Mobile Future and Symp. on Trends in Commun.} (Sympotic'03), Oct. 2003, pp. 21-24.
\bibitem{Turcsany04-10C}
M.~Turcsany and P.~Farkas, ``Three-dimensional orthogonal complete-complementary codes,'' in
 {\it Proc. Joint IST Workshop on Mobile Future and Symp. on Trends in Commun.} (Sympotic'04), Oct. 2004, pp. 98-101. 
\bibitem{Zhang05-02J}
C. Zhang, X. Lin, and M. Hatori, ``Two dimensioan combined complementary sequence and its application in multi-carrier CDMA," {\it IEICE Trans. Commun.}, vol. E88-B, no. 2, pp. 478-486, Feb. 2005.
\bibitem{Raja06-10C}
R. S. Raja Durai and N. Suehiro, ``Higher dimensional complete-complementary sequences: A generalized approach,'' in {\it Proc. Int. Symp. on Inform. Theory and its Applications} (ISITA 2006), Oct. 2006, pp. 928-933.
\bibitem{Han07-03J}
C.~Han, N.~Suehiro, and T.~Hashimoto, ``Least upper bound of family number of $N$-shift cross-orthogonal sequences,'' (in Japanese) {\it IEICE Trans. Fundamentals}, vol. J90-A, no.3, pp. 248-251, Mar. 2007.
\bibitem{Han05-09J}
C.~Han and N.~Suehiro, ``Upper bound of number of families in a periodic complete complementary sequence set and a complete complementary set,'' (in Japanese) {\it IEICE Trans. Fundamentals}, vol. J88-A, no. 9, pp. 1071-1073, Sep. 2005.
\bibitem{Han09-06C}
C.~Han and T. Hashimoto, ``$Z$-connecteble complete complementary code and its application in CDMA,'' in Proc. {\it Int. Symp. Inform. Theory} (ISIT2009), June 2009, pp.438-422.
\end{thebibliography}
\end{document}